\title{Sequential algorithms and the computational content of classical proofs}
\author{Thomas Powell}
\date{Preprint, \today}
\begin{document}

\maketitle

\begin{abstract}
We develop a correspondence between the theory of sequential algorithms and classical reasoning, via Kreisel's no-counterexample interpretation. Our framework views realizers of the no-counterexample interpretation as dynamic processes which interact with an oracle, and allows these processes to be modelled at any given level of abstraction. We discuss general constructions on algorithms which represent specific patterns which often appear in classical reasoning, and in particular, we develop a computational interpretation of the rule of dependent choice which is phrased purely on the level of algorithms, giving us a clearer insight into the computational meaning of proofs in classical analysis.
\end{abstract}

%%%%%%%%%%%%%%%%%%%%%%%%%%%%%%%%%%%%%%%%%%%%%%%%%
%%%%%%%%%%%%%%%%%%%%%%%%%%%%%%%%%%%%%%%%%%%%%%%%%
\section{Introduction}
%%%%%%%%%%%%%%%%%%%%%%%%%%%%%%%%%%%%%%%%%%%%%%%%%
%%%%%%%%%%%%%%%%%%%%%%%%%%%%%%%%%%%%%%%%%%%%%%%%%
\label{sec-intro}

The extraction of programs from proofs is an old and well developed concept. Ever since Hilbert's program, the connection between logic and computation has been a cornerstone of research in the foundations of mathematics and theoretical computer science, which has led to the construction of a large variety of formal interpretations that map proofs in a given system to terms which represent their computational content. The no-counterexample interpretation \cite{Kreisel(1951.0),Kreisel(1952.0)}, Dialectica interpretation \cite{Goedel(1958.0)} and modified realizability \cite{Kreisel(1959.0)} are examples of these. 

However, while interpretations give us a general procedure for extracting terms from proofs, actually understanding how those terms behave as \emph{algorithms} is an entirely different matter. The reason for this is clear: since the structure of extracted terms reflects the formal structure of the proof from which they were obtained, their salient features as programs lie hidden under a layer of syntax which accounts for every tiny logical step in the proof. Even the simplest proofs can yield programs whose behaviour is not at all obvious at first glance (see the study of the drinkers paradox carried out in \cite{Powell(2018.2)}), and those which make use of non-trivial classical principles such as the axiom of choice can be enormous (one of the first programs \cite{Murthy(1990.0)} extracted from a proof of Higman's lemma took up 12MB of computer code!).

More recent developments in the theory of program extraction have sought to address this problem. Classical realizability \cite{Krivine(2009.0)}, for example, makes use of a machine model which in particular allows a direct interpretation of classical reasoning via control operators, thus giving a more algorithmic account of programs obtained from classical proofs. In addition, refinements to traditional techniques have led to the extraction of much more efficient programs from proofs: see for instance \cite{BergBS(2002.0)} or \cite{Raffalli(2004.0)}. Nevertheless, these approaches are all fundamentally \emph{formal}, and involve a direct translation from proofs to terms.

In this paper, we take a different approach. We focus on the production not of terms in some formal language, but of \emph{high-level descriptions} of algorithms. In doing so, we propose a technique of program extraction which combines both formal and informal reasoning: appealing to human intuition in conjunction with powerful formal constructions on algorithms which are inspired by proof-theoretic techniques.

To be more precise, we develop a simple and very general kind of oracle transition system, inspired by Gurevich's notion of a sequential time algorithm \cite{Gurevich(2000.0)}. This models a class of algorithms which operate on some underlying state, and allow explicit queries to some `mathematical environment'. We set up a correspondence between sequential algorithms and Kreisel's famous no-counterexample interpretation, whereby an algorithm represents a realizer of the n.c.i. of some formula iff elements of its end state satisfy the quantifier-free part of that formula. Though in this paper we work primarily with the n.c.i., together with aspects of the Dialectica interpretation, the basic idea could be adapted to other settings.

In our main technical result, we give a construction on algorithms which interprets the rule of dependent choice. We consider this result to be of interest in its own right: Though inspired by Spector's bar recursor \cite{Spector(1962.0)}, our proof of correctness does not appeal to any sort of choice or bar induction on the metalevel, nor to any semantic properties such as continuity. Moreover, it gives an explicit step-by-step construction of the underlying choice sequence, something which is typically obscured in traditional approaches.

However, our primary motive is practical, and based on the following idea: Many non-constructive proofs of existential statements in mathematics contain a single instance of a strong non-constructive principle, such as choice or Zorn's lemma, and are otherwise computationally simple. Provided we have a clear algorithmic interpretation of the strong principle, we can often convert this to an algorithm for computing a witness to the theorem. We present a framework in which this idea can be made precise.

As such, the content of this paper should be viewed on two levels. On the one hand, we give a new computational interpretation of dependent choice using sequential algorithms, which provides us with some new insight into the computational content of classical analysis, and could be used to better understand bar recursive programs as extracted in e.g. \cite{OliPow(2015.1),OliPow(2015.0)}. On the other, this can all be viewed as a single illustration of a much more general approach to program extraction, which focuses on the intuitive representation of programs as algorithms.

It is important to emphasise that we do not consider ideas of this paper as replacing traditional methods of program extraction such as proof interpretation! General techniques which act on formalised proofs are essential if our goal is to e.g. synthesise verified software. Our approach is more in line with that of dynamical algebra \cite{CosLomRoy(2001.0)}, in which clever algorithms are sought to replace the use of e.g. a maximal ideal in some existence proof, as in \cite{Yengui(2008.0)}. Our paper proposes an approach for designing algorithms of this kind which is more closely related to traditional proof interpretations, and may eventually lead to a better understanding of the connection between dynamical algebra and techniques such as the n.c.i. and the Dialectica interpretation.

%Proof interpretations such as the n.c.i. or the Dialectica were initially developed for foundational purposes, for relative consistency proofs or characterising the provable recursive functions of some theory. For applications of this kind, an algorithmic understanding of extracted programs is irrelevant. Moreover, 

%%%%%%%%%%%%%%%%%%%%%%%%%%%%%%%%%%%%%%%%%%%%%%%%%
\subsection{A remark on terminology}
%%%%%%%%%%%%%%%%%%%%%%%%%%%%%%%%%%%%%%%%%%%%%%%%%
\label{sec-intro-term}

Our use of the term 'sequential algorithm' follows Gurevich \cite{Gurevich(2000.0)}, although our notion of an algorithm is much simpler, and might be more accurately described as an 'oracle transition system'. Our aim is simply to give a flexible and above all high-level characterisation of algorithms that can be viewed as some evolving computation on a state. In particular, our focus is quite different from sequential algorithms \cite{BerCur(1982.0)} or dialog games \cite{HylOng(2000.0)} as seen in the context of the semantics of PCF. 

%%%%%%%%%%%%%%%%%%%%%%%%%%%%%%%%%%%%%%%%%%%%%%%%%
\subsection{Outline of paper}
%%%%%%%%%%%%%%%%%%%%%%%%%%%%%%%%%%%%%%%%%%%%%%%%%
\label{sec-intro-overview}

Section \ref{sec-nci} provides a brief overview of the no-counterexample interpretation. Sections \ref{sec-alg}-\ref{sec-thmalg} set up our framework, and in particular our notion of an oracle sequential algorithm. We present our main technical result on algorithms - an interpretation of the rule of dependent choice - in Section \ref{sec-dc}, and conclude, in Section \ref{sec-tape}, with a detailed case study, in which we consider the so-called `infinite tape' example.

%%%%%%%%%%%%%%%%%%%%%%%%%%%%%%%%%%%%%%%%%%%%%%%%%
%%%%%%%%%%%%%%%%%%%%%%%%%%%%%%%%%%%%%%%%%%%%%%%%%
\section{The no-counterexample interpretation}
%%%%%%%%%%%%%%%%%%%%%%%%%%%%%%%%%%%%%%%%%%%%%%%%%
%%%%%%%%%%%%%%%%%%%%%%%%%%%%%%%%%%%%%%%%%%%%%%%%%
\label{sec-nci}

We begin by introducing, in a rather informal setting, Kreisel's famous no-counterexample interpretation (henceforth abbreviated as the n.c.i). It should be mentioned that while the n.c.i. is closely related to the combination of the negative translation and G\"{o}del's functional (or Dialectica) interpretation, in the case of $\forall\exists\forall$-formulas, which we study here, the two coincide, and though we use ideas from the Dialectica interpretation throughout the paper, we explain these as and when we need them, and so a more detailed introduction to proof interpretations and higher-order computation is not necessary.

Unlike most works on the subject of proof interpretations, which fix a formal type theory from the outset, we work in a simple mathematical setting of sets $X,Y,Z,\ldots$ and functions $f:X\to Y$ between these sets. We will also talk about \emph{functionals} $\Phi:Y^X\to Z$ i.e. maps which take functions as inputs, and will later impose some conditions on functionals such as \emph{continuity}. Usually all this is formalised in some higher-order extension of arithmetic such as the $\PAw$ of \cite{Troelstra(1973.0)}, which talk more generally about objects of arbitrary finite type. However, because we are concerned with high-level algorithms rather than detailed general soundness theorems, an informal setting works well and allows us a lot more flexibility.

Now, suppose we have a theorem $A$ of the form 
\begin{equation*}
A:\equiv (\forall u \in U)(\exists x \in X)(\forall y \in Y) P(u,x,y).
\end{equation*}
where the predicate $P(u,x,y)$ is taken to be decidable, which means that we assume some encoding of objects $u,x,y$ and a computable procedure which determines the truth value of $P(u,x,y)$. 

In general, it is not possible to find a direct witness for the existential quantifier $\exists x$. The usual example is given by the Kleene $T$-predicate:
\begin{equation*}
(\forall u\in\NN)(\exists x\in\NN)(\forall y\in\NN)(T(u,u,x)\vee \neg T(u,u,y)).
\end{equation*}
Here, if we could compute a bound on the $\exists x$ in terms of $u$, we would have solved the halting problem. However, it is possible to reformulate such statements so that they become computationally tractable. Let's suppose that the theorem $A$ above is not true i.e. there is some $u\in\ U$ such that
\begin{equation*}
(\forall x \in X)(\exists y \in Y) \neg P(u,x,y).
\end{equation*}
Then (by the axiom of choice for quantifier-free formulas) there would exist for some $u$ a `counterexample function' $f:X\to Y$ satisfying
\begin{equation*}
(\forall x) \neg P(u,x,f(x))
\end{equation*}
Therefore, negating a second time, we see that $A$ is equivalent to the statement $\nci{A}$ that \emph{any counterexample function must fail}, or in logical terms:
\begin{equation*}
\nci{A}:\equiv (\forall u\in U,\forall f:X\to Y)(\exists x\in X) P(u,x,f(x)).
\end{equation*}
An explicit witness for the quantifier $\exists x$ would then be given by a functional $\Phi:U\times Y^X\to X$ which satisfies
\begin{equation*}
(\forall u,f:X\to Y) P(u,\Phi(u,f),f(\Phi(u,f))).
\end{equation*}
It turns out that in very general circumstances it is possible to find such a functional $\Phi$ which is \emph{computable} in the oracle $f$. In our simple example of the Halting problem, the n.c.i.-version of the theorem is just
\begin{equation*}
(\forall u,f:\NN\to\NN)(\exists x) (T(u,u,x)\vee \neg T(u,u,f(x)))
\end{equation*}
and this is realized by the functional defined as
\begin{equation*}
\Phi(u,f):=\begin{cases}0 & \mbox{if $\neg T(u,u,f(0))$}\\ f(0) & \mbox{otherwise}.\end{cases}
\end{equation*}
which can easily be computed by some suitable oracle Turing machine. 

Typically this phenomenon is made precise through a soundness theorem, which states that whenever $A$ is provable in some theory $\mathcal{T}$ then there is a corresponding $\Phi$ in some subrecursive class of functionals $\mathcal{F}$. In the case that $\mathcal{T}$ is Peano arithmetic, a realizer for the n.c.i. of $A$ can be found in G\"{o}del's System T of primitive recursive functionals of all finite type:
\begin{theorem}[Kreisel]
Suppose that $(\forall u)(\exists x)(\forall y) P(u,x,y)$ is provable in Peano arithmetic, where $P(u,x,y)$ is a quantifier-free formula. Then there is a term $t$ of System T such that $(\forall u,f) P(u,tuf,f(tuf))$ is provable in System T. 
\end{theorem}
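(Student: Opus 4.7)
The strategy is the classical route via Gödel's Dialectica interpretation combined with a negative translation, specialised to the $\forall\exists\forall$ case. I would structure the argument in three stages.

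First, I would apply the Gödel--Gentzen negative translation to the PA-proof of $A\equiv (\forall u)(\exists x)(\forall y) P(u,x,y)$, obtaining a proof in Heyting arithmetic $\mathrm{HA}$ (extended with Markov's principle if needed, which is admissible because $P$ is decidable) of a formula classically equivalent to $(\forall u)\neg\neg(\exists x)\neg\neg(\forall y)P(u,x,y)$. Using decidability of $P$, this collapses to $(\forall u)\neg\neg(\exists x)(\forall y)P(u,x,y)$, with the two layers of double negation playing exactly the role needed to expose the counterexample function in the next step.

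Second, I would apply the Dialectica interpretation to this intuitionistic proof. Reading off the translation schematically: if $B\equiv (\exists x)(\forall y)P(u,x,y)$, then $B^D\equiv (\exists x)(\forall y)P(u,x,y)$, and $(\neg B)^D$ introduces the counterexample function $f:X\to Y$ with matrix $\neg P(u,x,f(x))$; the outer $\neg\neg$ then produces an existential quantifier over a functional $\Phi$ taking $u$ and $f$ and returning an $x$, with matrix $\neg\neg P(u,\Phi(u,f),f(\Phi(u,f)))$. Decidability of $P$ lets us strip off this outer double negation, so the Dialectica translation of $\nci{A}$ coincides, up to logical equivalence in $\mathrm{HA}$, with the n.c.i.\ of $A$.

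Third, I would invoke soundness of the Dialectica interpretation: if $\mathrm{HA}$ proves $C$, then its Dialectica translation is witnessed by a tuple of closed terms of System T, with the matrix provable in the quantifier-free fragment of T. Applying this to $C$ equal to the negatively translated $A$ produces the desired term $t$ with $\mathrm{T}\vdash (\forall u,f) P(u,tuf,f(tuf))$.

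The main obstacle is precisely the soundness theorem in the third step, which is not a single calculation but a proof by induction on the structure of $\mathrm{HA}$-derivations. Each logical rule and axiom must be interpreted by an explicit construction on System T terms; the propositional rules are routine, the quantifier rules require careful tracking of the alternating $\exists$/$\forall$ witnesses, and the only genuinely non-trivial case is the induction schema, which is interpreted via the primitive recursor $R$ of System T and is the reason one cannot hope to stay within a weaker term calculus. Once this machinery is in place, the specialisation to $\forall\exists\forall$-formulas with decidable matrix, together with the collapse of double negations that decidability permits, gives the stated result directly.
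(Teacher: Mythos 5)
Your proposal follows exactly the route the paper relies on: it does not prove the theorem itself but defers to the standard argument (Kohlenbach, Proposition 10.9), namely negative translation of the PA-proof into HA followed by the Dialectica interpretation, whose soundness theorem supplies the System T term, with decidability of $P$ used to eliminate the residual double negations. Your sketch is a correct outline of that same argument, so there is nothing to add beyond noting that the elimination of the double negations is most cleanly absorbed into the Dialectica interpretation itself (which validates Markov's principle) rather than performed inside HA.
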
 

\begin{remark}
A proof of the theorem in its form here is given in e.g. \cite[Proposition 10.9]{Kohlenbach(2008.0)}. However, Kreisel initially proved that the n.c.i. could be realized by some $\alpha$-recursive functional $\Phi$ for some $\alpha<\varepsilon_0$, which follows by applying a normalization procedure to the term $t$ in the above result. Naturally, the theorem holds more generally for formulas of the form 
\begin{equation*}
(\exists x_1)(\forall y_1)\ldots (\exists x_n)(\forall y_n) P(x_1,y_1,\ldots,x_n,y_n)
\end{equation*}
which would be interpreted by a sequence of functionals $\Phi_1,\ldots,\Phi_n$ satisfying
\begin{equation*}
(\forall f_1,\ldots,f_n) P(\Phi_1(\vec{f}),f_1(\Phi_1(\vec{f})),\ldots,\Phi_n(\vec{f}),f_n(\Phi_1(\vec{f}),\ldots,\Phi_n(\vec{f})))
\end{equation*}
but in this article we stick to at most three quantifier alternations.
\end{remark}
The n.c.i. gives us a connection between theorems of the form $A\equiv (\forall u\in U)(\exists x\in X)(\forall y\in Y) P(u,x,y)$ and computable functionals of type $U\times Y^X\to X$. There are several ways in which the behaviour of such functionals can be characterised. One which is particularly illuminating is the idea that, while $A$ states that an ideal object $x$ exists, which satisfies $P(u,x,y)$ for all $y$, $\nci{A}$ states that for any $f$ an \emph{approximation} to $x$ exists, which satisfies $P(u,x,y)$ only for $y=f(x)$. In this sense, $\Phi$ is a functional for computing approximations to non-computable objects, and in practise, it is often the case that terms representing $\Phi$ carry out some kind of learning, or trial-and-error procedure. The n.c.i. has been studied from this perspective in a number of places, see for example \cite{KohSaf(2014.0),Krivine(2003.0),Powell(2016.0)}.

In this article, we focus on the relationship between theorems and \emph{sequential algorithms}. Instead of focusing on general soundness theorems which formally produce e.g. $\alpha$-recursive functionals or closed terms of System T, we establish a connection between theorems and a kind of oracle transition system which we will call an \emph{approximation algorithm}:
\begin{equation*}
\mbox{Theorem}\stackrel{\scriptsize{n.c.i.}}{\leftrightarrow}\mbox{Approximation algorithm}
\end{equation*}
The main results of the paper will be a handful of results which explore this connection, and in particular establish constructions which act directly on algorithms.

%%%%%%%%%%%%%%%%%%%%%%%%%%%%%%%%%%%%%%%%%%%%%%%%%
%%%%%%%%%%%%%%%%%%%%%%%%%%%%%%%%%%%%%%%%%%%%%%%%%
\section{Approximation algorithms}
%%%%%%%%%%%%%%%%%%%%%%%%%%%%%%%%%%%%%%%%%%%%%%%%%
%%%%%%%%%%%%%%%%%%%%%%%%%%%%%%%%%%%%%%%%%%%%%%%%%
\label{sec-alg} 

We begin by introducing, and carefully motivating, our notion of an approximation algorithm. While this is essentially nothing more than a transition system with oracle queries, there are several details and design choices that need to be made. We start by outlining the very simple notion of a sequential time algorithm.
\begin{definition}
\label{def-sa}
A sequential algorithm $\A:=(S,I,E,\rhd)$ consists of:
\begin{itemize}

\item A set $S$ of states.

\item Subsets $I,E\subset S$ of initial and end states.

\item A partial function $\rhd:S\to S$.

\end{itemize}
We write $t=\rhd(s)$ inline as $s\rhd t$.
\end{definition}

\begin{definition}
\label{def-run}
A \emph{run} in $\A$ is a sequence
\begin{equation*}
s_0\rhd s_1\rhd\ldots \rhd s_n
\end{equation*}
where we assume that $s_i\notin E$ for $i<n$. We write $s_0\rhd^\ast s_n$ whenever there is a run from $s_0$ to $s_n$. The element $t$ is \emph{reachable} from $s$ if $s\rhd^\ast t$.
\end{definition}

\begin{definition}
\label{def-term}
We say that $s$ \emph{terminates} and write $\wf{s}$ if there exists some $t\in E$ such that $s\rhd^\ast t$. By our definition of a run, $t$ must be the first element of $E$ reachable from $s$. We say that $\A$ terminates if $\wf{s}$ for all $s\in I$.
\end{definition}

So far, our definition essentially coincides with the notion of a sequential time algorithm in Gurevich \cite{Gurevich(2000.0)} (cf. Postulate 1), and is of course nothing more than a simple deterministic transition system. 

This already allows us to give clean, high level descriptions of sequential processes: For instance, the Euclidean algorithm can be described as the sequential algorithm $\A$ with $S:=\NN\times\NN$, $I:=\{(x,y)\; | \; x>y\}$ and $E:=\{(x,0)\; | \; x\in\NN\}$, with transitions given by
\begin{equation*}
(x,y)\rhd (y,x\;\mathrm{ mod }\; y).
\end{equation*}
While the algorithms of Gurevich obey further postulates e.g. states are structures to which only local changes are made and so on, we do not refine our definition any further: Our priority here is not on a complete characterisation of a sequential algorithm, but on devising a simple abstract model through which the computational behaviour of certain non-constructive principles can be clearly understood. However, we conjecture that were our definition to be enhanced with further conditions along the lines of \cite{Gurevich(2000.0)}, most of what follows could be adapted to the more restrictive setting. 

%%%%%%%%%%%%%%%%%%%%%%%%%%%%%%%%%%%%%%%%%%%%%%%%%
\subsection{Input/output}
%%%%%%%%%%%%%%%%%%%%%%%%%%%%%%%%%%%%%%%%%%%%%%%%%
\label{sec-alg-oi} 

In order to make explicit the idea that a sequential algorithm $\A$ computes a function $f:U\to V$, one needs to add some additional structure dealing with input and output. We incorporate this into the following expanded definition, which we call an \emph{explicit} sequential algorithm.

\begin{definition}
\label{def-esa}
An explicit sequential algorithm $\A:=(S,E,\rho,\pi,\rhd)$ of sort $U,V$ is a sequential algorithm which comes equipped with functions $\rho:U\to S$ and $\pi:S\to V$, where input states are now a secondary notion defined as $I:=\rho(U)$.
\end{definition}
We could have naturally restricted the domain of $\pi$ to be $E$ rather than $S$. However, throughout this paper, $\pi$ should be interpreted as a \emph{location} in the state, which will typically just be a projection, and as such, it just returns whatever is currently in the output location, regardless of whether the state is an end state or not. Moreover, algorithms associated to proofs are typically characterised as `building' some witness $v\in V$ to an existential statement by trial and error. Thus we want this to be explicitly part of the evolving state.
\begin{definition}
\label{def-indfun}
An explicit sequential algorithm $\A$ of sort $U,V$ induces a partial function $f_\A:U\to V$ in the obvious way:
\begin{equation*}
f_\A(u):=\begin{cases}\pi(t) & \mbox{if $\rho(u)\rhd^\ast t\in E$}\\
\mbox{undefined} & \mbox{otherwise} \end{cases}
\end{equation*}
We say, conversely, that a partial function $f:U\to V$ is \emph{computed} by a sequential algorithm $\A$ if $f_\A(u)=f(u)$ for all $u\in U$. 
\end{definition}

\begin{proposition}
\label{prop-term}
If $\A$ terminates then the partial function $f_\A$ is total.
\end{proposition}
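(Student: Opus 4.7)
The proof plan is essentially an unfolding of definitions, so I expect it to be very short. The strategy is: pick an arbitrary $u \in U$, show that $\rho(u)$ is an initial state, invoke the termination hypothesis to obtain an end state reachable from $\rho(u)$, and then observe that this makes $f_\A(u)$ defined.

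In more detail, I would fix $u \in U$ and note that by Definition \ref{def-esa}, the set of initial states is $I = \rho(U)$, so in particular $\rho(u) \in I$. Since $\A$ terminates, Definition \ref{def-term} gives us that $\wf{\rho(u)}$, i.e.\ there exists $t \in E$ such that $\rho(u) \rhd^\ast t$. By Definition \ref{def-indfun}, this is precisely the clause that makes $f_\A(u) = \pi(t)$ well-defined. Since $u$ was arbitrary, $f_\A$ is total.

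There is no real obstacle here: the statement is a direct consequence of how the various notions have been layered. The only thing worth pausing over is the subtle point that Definition \ref{def-run} demands $s_i \notin E$ for $i < n$, which is what guarantees that the end state witnessing termination is unique (the first $E$-element reached), but this uniqueness is not actually needed for totality; existence alone suffices to make $f_\A(u)$ defined. I would therefore not belabour this point in the proof and simply note that $f_\A(u) = \pi(t)$ for any such $t \in E$ reachable from $\rho(u)$.
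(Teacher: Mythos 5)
Your proof is correct and matches the paper's treatment: the paper states this proposition without proof, regarding it as an immediate unfolding of Definitions \ref{def-esa}, \ref{def-term} and \ref{def-indfun}, which is exactly what you do. Your side remark about uniqueness of the reached end state (guaranteed by determinism of $\rhd$ and the run condition $s_i \notin E$ for $i < n$) is accurate and appropriately not belaboured.
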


\begin{example}
The Euclidean algorithm can be characterised as an explicit sequential algorithm $\A$ of sort $\NN\times\NN$, $\NN$ by defining $S:=\NN\times\NN$, $\rho(a,b):=(\max(a,b),\min(a,b))$, $\pi(x,y):=x$, $E:=\{(x,0)\; | \; x\in\NN\}$ and $(x,y)\rhd (y,x\; \mathrm{mod}\; y)$. The induced function $f_\A$ returns the gcd of any pair of natural numbers: for example, $f(28,72)$ would be computed as
\begin{equation*}
(72,28)\rhd (28,16)\rhd (16,12)\rhd (12,4)\rhd (4,0)
\end{equation*}
yielding $f(28,72)=\pi(4,0)=4$, and the sequence $72,28,16,12,4$ can be seen as a succession of improving approximations to the gcd. 
\end{example}

Note that our notion of an algorithm is extremely broad. In particular, we make no assumptions about $\rhd$: as such, any function $f:U\to V$ is trivially computed by the algorithm $\A:=(\{0,1\}\times U\times V,\rho,\pi,E,\rhd)$ of sort $U,V$, where $\rho(u)=(0,u,v_0)$ for some $v_0\in V$, $\pi(b,u,v):=v$, end states are those of form $(1,u,v)$ and $(0,u,v)\rhd (1,u,f(u))$. 

Naturally, though, our intention is that when a function is induced by some algorithm, the latter gives some meaningful, though potentially high-level, description of how that function is \emph{computed}. We could make the notion of a computable algorithm precise as follows:
\begin{definition}
\label{def-comp}
An explicit algorithm $\A:=(S,E,\rho,\pi,\rhd)$ of sort $U,V$ is computable if $\rho,\pi$ and $\rhd$ are computable (resp. partial computable) functions relative to some representations of $U,V$ and $S$, and membership of $E$ is decidable.
\end{definition}

\begin{proposition}
\label{prop-comp}
If $\A$ is computable then the function $f_\A$ is partial computable.
\end{proposition}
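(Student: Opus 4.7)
The plan is to directly exhibit a partial computable procedure that computes $f_\A$, essentially by simulating a run of $\A$ step by step. Given the definition of $f_\A$ via runs, this is the obvious approach, and the heart of the argument will just be checking that each operation required at each step is effective under the hypothesis that $\rho$, $\pi$, $\rhd$ are (partial) computable and $E$ is decidable.

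Concretely, I would describe the following procedure $M$, parameterised over a representation of $S$, $U$ and $V$ and whatever encoding is implicit in Definition~\ref{def-comp}. On input $u \in U$, first compute $s_0 := \rho(u)$. Then enter a loop: at stage $n$, test whether $s_n \in E$ (decidable by hypothesis); if so, halt and output $\pi(s_n)$ (computable by hypothesis); otherwise compute $s_{n+1} := \rhd(s_n)$ (partial computable by hypothesis) and continue. If at some stage $\rhd(s_n)$ is undefined, or if no $s_n$ ever lies in $E$, the procedure diverges.

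The correctness argument is then immediate from Definitions~\ref{def-run} and~\ref{def-indfun}. If $\rho(u)\rhd^\ast t \in E$, then because $\rhd$ is a partial function the run $s_0 \rhd s_1 \rhd \cdots \rhd s_n = t$ is unique; the procedure $M$ reproduces exactly this sequence, reaches $t$ (the first end state encountered), and returns $\pi(t) = f_\A(u)$. Conversely, if $f_\A(u)$ is undefined, then either the iteration of $\rhd$ from $\rho(u)$ diverges or becomes undefined at some finite stage, and in both cases $M$ also fails to halt.

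I do not expect any real obstacle here: the proposition is essentially a Church--Turing style observation that an effective deterministic transition system with a decidable halting test computes a partial recursive function. The only subtlety worth flagging is that the hypothesis allows $\rhd$ itself to be merely partial computable, so one must be careful to say \emph{partial} computable in the conclusion, and to note that $M$ diverges in exactly the same situations where $f_\A$ is undefined, including the case where the run is blocked because $\rhd$ is undefined on some intermediate state.
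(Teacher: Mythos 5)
Your proposal is correct and follows essentially the same route as the paper: simulate the run step by step, computing $s_0:=\rho(u)$, testing membership of $E$ at each stage, applying $\rhd$ otherwise, and outputting $\pi(t)$ on reaching an end state. Your additional remarks on the divergence cases are a harmless elaboration of the same argument.
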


\begin{proof}
We sketch the appropriate Turing machine: given some input $u$, we first run a Turing machine which computes $\rho$, so that the initial state $s_0:=\rho(u)$ is stored somewhere in the memory. We now simulate the algorithm itself. We first run a Turing machine for $E$ to check whether or not $s_0\in E$. If it is, then we run $\pi$ on input $s_0$ and halt. Otherwise, it run a Turing machine for $\rhd$ to compute $s_1:=\rhd(s_0)$. We then repeat the above process. If $f_\A(u)$ is defined, then there is some $t\in E$ such that $\rho(u)\rhd^\ast t$. In which case, the Turing machine will eventually halt with $\pi(t)=f(u)$ in its memory.
\end{proof}

Computability issues do not play an important role in this paper. All of the algorithms we discuss are computable, and all constructions on algorithms will clearly preserve computability, and we do not try to make this precise. For us, sequential algorithms are a way of \emph{describing} functions, rather than a characterisation of their computational strength.

%We will now enrich our basic definition in a certain way, to make input/output slightly less cumbersome, and to incorporate the notion of an oracle query. We deal with each of these in turn.

%%%%%%%%%%%%%%%%%%%%%%%%%%%%%%%%%%%%%%%%%%%%%%%%%
\subsection{Oracle queries}
%%%%%%%%%%%%%%%%%%%%%%%%%%%%%%%%%%%%%%%%%%%%%%%%%
\label{sec-alg-app} 

Explicit sequential algorithms in the usual sense allow us to define functions. However, this paper is primarily concerned with \emph{functionals}, which means that we must now extend our definition of an algorithm to allow for oracle queries. Technically speaking, we could accomplish this by including an oracle as some parameter for our transition function $\rhd$. However, it will be more elegant and meaningful to include oracles explicitly as part of the definition of an algorithm. The following is a simple modification of Definition \ref{def-esa}, and constitutes our formulation of an oracle sequential algorithm:
\begin{definition}
\label{def-osa}
An oracle sequential algorithm (OSA) of sort $U,V,X,Y$ is given by a tuple $\A:=(R,Q,E,\rho,\xi,\pi,\rhd)$, whose components are as follows:
\begin{itemize}

\item  $R$ is a set of registers (which were formerly our states), while states are now defined by $S:=R\times Y_\Box$ where $Y_\Box:=Y+\{\Box\}$ for some object $\Box$ which denotes an `empty register'. States will be written as 
\begin{equation*}
\stac{r}{o}\mbox{ where $o=\Box$ or $o\in Y$},
\end{equation*}
and we define $S_\Box:=\{\stac{r}{\Box}\; | \; r\in R\}$;

\item $Q\subseteq S_\Box$ are \emph{query} states; 

\item $E\subseteq S$ are end states, where we assume that $Q\cap E=\emptyset$;

\item $\rho:U\to R$ is an input map and we define $I:=\{\rho_u\; | \; u\in U\}\subseteq S_\Box$ where $\rho_u:=\stac{\rho(u)}{\Box}$;

\item $\xi:R\to X$ returns the current query; 

\item $\pi:S\to V$ is an output map;

\item $\rhd:S\to S$ is a partial function.

\end{itemize}

\end{definition}

States of an OSA now come equipped with a `answer tape', which is either in an `empty' state $\Box$ or contains some $y\in Y$. They can also now be in a query position, in which case they query the oracle on some value $\xi(r)$ stored in $r$, which writes the answer in the answer tape. We make this formal as follows:

\begin{definition}
\label{def-oracle}
An oracle for an OSA $\A$ of sort $X,Y,U,V$ is a function $f:X\to Y$. For any oracle, there is a corresponding oracle transition $\orhd_f:Q\to S\backslash S_\Box$ defined by
\begin{equation*}
\stac{r}{\Box}\orhd_f \stac{r}{y}\mbox{ for $y:=f(\xi(r))$}.
\end{equation*}
\end{definition}

The role of the empty element $\Box$ is to differentiate `before' and `after' the oracle transition. Of course, in a concrete computational model of our OSA, a transition to an empty oracle tape need not mean deleting the contents of the tape: it just refers to a marker which indicates that the contents of the tape are no longer relevant, since we are anticipating a new oracle answer. 

\begin{definition}
\label{def-orun}
A run in an OSA $\A$ on the oracle $f$ is a sequence of states
\begin{equation*}
s_0,s_1,\ldots,s_n
\end{equation*}
where
\begin{itemize}

\item $s_i\notin E$ for $i<n$,

\item if $s_i\notin Q$ then $s_i\rhd s_{i+1}$,

\item if $s_i\in Q$ then $s_i\orhd_f s_{i+1}$.

\end{itemize}
We write $s\rhd^\ast_f t$ whenever there is a run on $f$ from $s=$ to $t$, and say that $t$ is reachable from $s$ on $f$.
\end{definition}

Note that our notion of a run is similar to the \emph{interactive runs} of Gurevich \cite[Section 8]{Gurevich(2000.0)}. The following definitions and results all lift directly to oracle sequential algorithms.

\begin{definition}
\label{def-oterm}
We say that $s$ terminates on $f$, and write $\wfs{s}{f}$, if $s\rhd_f^\ast t\in E$. An OSA terminates on some class $\F$ of oracles if $\wfs{\rho_u}{f}$ for any $f\in \F$ and all $u\in U$.
\end{definition}

We will often suppress $f$ on both $\rhd_f$ and $\wfs{s}{f}$ when there is no risk of ambiguity.

\begin{definition}
\label{def-indfunc}
An OSA of sort $X,Y,U,V$ induces a partial functional $\Phi_\A:U\times Y^X\to V$ as follows:
\begin{equation*}
\Phi_\A(u,f):=\begin{cases}\pi(t) & \mbox{if $\rho_u\rhd_f^\ast t\in E$}\\
\mbox{undefined} & \mbox{otherwise} \end{cases}
\end{equation*}
\end{definition}

\begin{proposition}
\label{prop-oterm}
If $\A$ terminates on $\F$ then the partial functional $\Phi_\A$ is total when restricted to $U\times \F\subseteq U\times Y^X$.
\end{proposition}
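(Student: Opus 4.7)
The plan is to unwind the definitions and check that the two subtleties (existence and uniqueness of the value) are both handled by the setup already in place. Fix an arbitrary pair $(u,f) \in U \times \F$. By the hypothesis that $\A$ terminates on $\F$ together with Definition \ref{def-oterm}, there exists $t \in E$ with $\rho_u \rhd_f^\ast t$. Feeding this into the first clause of Definition \ref{def-indfunc} then gives $\Phi_\A(u,f) = \pi(t)$, which is defined since $\pi$ has domain $S$. This establishes totality on $U \times \F$.

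The only point worth pausing on is well-definedness of $\Phi_\A(u,f)$, i.e., that the displayed value $\pi(t)$ does not depend on which terminating $t$ one picks. This, however, is already baked into the framework: by Definition \ref{def-orun}, the run $s_0, s_1, \ldots, s_n$ witnessing $\rho_u \rhd_f^\ast t$ satisfies $s_i \notin E$ for $i < n$, so $t$ is forced to be the \emph{first} end state encountered, and since both the deterministic transition $\rhd$ (a partial function) and the oracle transition $\orhd_f$ (determined by $f$) produce at most one successor from any state, the run itself is uniquely determined by $\rho_u$ and $f$. Hence $t$, and therefore $\pi(t)$, is unique. No genuine obstacle arises; the proposition is essentially an immediate corollary of Proposition \ref{prop-term} transplanted to the oracle setting, with the determinism of $\orhd_f$ doing the minor extra work.
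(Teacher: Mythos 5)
Your proof is correct and matches the paper's (implicit) argument: the paper states this proposition without proof, treating it as an immediate consequence of Definitions \ref{def-oterm} and \ref{def-indfunc}, which is exactly the unwinding you perform. Your additional remark on well-definedness via determinism of $\rhd$ and $\orhd_f$ is a reasonable bit of extra care but introduces nothing beyond what the framework already guarantees.
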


\begin{definition}
\label{def-ocomp}
An OSA of sort $X,Y,U,V$ is computable if $\rho,\xi,\pi$ and $\rhd$ are computable (resp. partial computable) functions, and membership of both $Q$ and $E$ is decidable, relative to some representations of $X,Y,U,V$.
\end{definition}

\begin{proposition}
\label{prop-ocomp}
If $\A$ is computable then the functional $\Phi_\A$ is computable (i.e. can be computed via an oracle Turing machine).
\end{proposition}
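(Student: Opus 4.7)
The plan is to mimic the proof of Proposition \ref{prop-comp}, adapting the Turing machine simulation to handle the oracle queries afforded by an oracle Turing machine. What makes this straightforward is that the definition of a run in Definition \ref{def-orun} already gives us a step-by-step recipe: at each state, we can decide whether to apply $\rhd$ or $\orhd_f$ based on membership in $Q$, and we halt based on membership in $E$. Since by assumption both $Q$ and $E$ are decidable, and $\rho, \xi, \pi, \rhd$ are all (partial) computable relative to some representations of $X,Y,U,V$, each individual step can be carried out by a subroutine.

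Concretely, given input $u$ and oracle $f$, the oracle Turing machine first invokes the computable subroutine for $\rho$ to write the initial state $\rho_u = \stac{\rho(u)}{\Box}$ into working memory. It then enters a main loop in which, for the current state $s$, it first runs the decider for $E$: if $s\in E$, it computes and outputs $\pi(s)$ and halts. Otherwise it runs the decider for $Q$. If $s\in Q$, then $s$ has the form $\stac{r}{\Box}$, so the machine computes $\xi(r)$, writes it to the oracle query tape, reads back the answer $y = f(\xi(r))$, and updates the state to $\stac{r}{y}$, thereby simulating $\orhd_f$. If $s\notin Q$, it instead invokes the subroutine for $\rhd$ to produce the successor state. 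The loop then repeats.

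Correctness follows directly from Definition \ref{def-indfunc}: if $\Phi_\A(u,f)$ is defined, then by definition $\rho_u\rhd_f^\ast t$ for some $t\in E$, meaning there is a finite run $s_0,\ldots,s_n$ of the type described, and the simulation above reproduces precisely this run, halting with $\pi(t)$ on the output tape. If $\Phi_\A(u,f)$ is undefined, then no such $t$ exists, so the simulation never encounters a state in $E$ and runs forever, which is the correct behaviour for a partial computable functional.

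There is no real obstacle here: the only subtle point is that $\pi$ is defined on all of $S$ rather than just $E$, but this is harmless since the machine only evaluates $\pi$ when it has reached an end state. The proof is essentially the same as Proposition \ref{prop-comp}, with the additional remark that whenever the simulation encounters a query state, the query and response are mediated through the oracle tape of the ambient oracle Turing machine rather than being internally computed.
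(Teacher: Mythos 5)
Your proposal is correct and follows essentially the same approach as the paper: adapt the Turing machine simulation from Proposition \ref{prop-comp}, testing membership in $Q$ at each step and routing queries through the oracle tape of the ambient oracle Turing machine. The only difference is that you spell out the loop and the undefined case in slightly more detail than the paper's sketch.
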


\begin{proof}
A simple adaptation of the proof of \ref{prop-comp}. As before, after initialising the input we simulate the run on $\rhd_f$. Now, in addition, at each state $\stac{r}{\Box}$ the Turing machine needs to test whether or not $\stac{r}{\Box}\in Q$. If it is, then it computes $\xi(s)$ and writes it to the machine's oracle query tape, and after receiving some answer $y$ sets the current state to $\stac{r}{y}$. If $s\notin Q$ then it computes $s':=\rhd(s)$ as before. If $\Phi_\A(u,f)$ is defined there is some $t\in E$ such that $\rho_u\rhd^\ast t$, and then the Turing machine when run on oracle $f$ will halt with $\pi(t)=\Phi_\A(u,f)$ in its memory.
\end{proof}

\begin{definition}
\label{def-call}
Suppose that $\A$ is an approximation algorithm which terminates on $u$ and $f$ i.e. there is a run 
\begin{equation*}
\rho_u:=s_0, s_1,\ldots , s_n\in E
\end{equation*}
Let $s_{i_0},\ldots,s_{i_{m-1}}$ denote the subsequence of those $s_j\in Q$, and let $x_j:=\xi(r_{i_j})$ where $s_{i_j}=\stac{r_{i_j}}{\Box}$. We call the sequence $x_0,\ldots,x_{m-1}$ the \emph{query sequence} of $\A$ on $u$ and $f$.
\end{definition}

\begin{proposition}[Continuity]
\label{prop-cont}
Suppose that $\A$ is terminating on $\F$. Then for all $u$ and $f\in \F$ there exists a finite sequence $x_0,\ldots,x_{m-1}$ such that for all $g:X\to Y$
\begin{equation*}
\forall j<m(f(x_j)=g(x_j))\to \Phi_\A(u,f)=\Phi_\A(u,g).
\end{equation*}
\end{proposition}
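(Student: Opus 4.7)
The plan is to take $x_0,\ldots,x_{m-1}$ to be precisely the query sequence of $\A$ on $u$ and $f$, as provided by Definition \ref{def-call} (which applies since $\A$ terminates on $f \in \F$). Intuitively, all the oracle ever affects is the value written into the answer tape at each query state, so any $g$ agreeing with $f$ at exactly those queries must trace out the same run. The formal tool is a straightforward induction along the length of the terminating run on $f$, with no appeal to semantic properties of $f$ or to bar induction/topology.

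In detail, I would fix $u$ and $f \in \F$, write out the terminating run $\rho_u = s_0, s_1, \ldots, s_n \in E$ on $f$ with query subsequence $s_{i_0}, \ldots, s_{i_{m-1}}$ as in Definition \ref{def-call}, and then let $g:X\to Y$ be any function with $g(x_j) = f(x_j)$ for all $j < m$. Writing $s'_0, s'_1, \ldots$ for the run on $g$ starting from $\rho_u$, I prove by induction on $k \leq n$ that $s'_k$ is defined and equal to $s_k$. The base case $k=0$ is immediate since $s'_0 = \rho_u = s_0$. For the inductive step, assume $s'_k = s_k$ and $s_k \notin E$. If $s_k \notin Q$, then $s'_{k+1} = \rhd(s_k) = s_{k+1}$, since the transition $\rhd$ depends only on the state, not the oracle. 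If on the other hand $s_k \in Q$, then $s_k = s_{i_j}$ for a unique $j < m$, so $s_k = \stac{r_{i_j}}{\Box}$ with $\xi(r_{i_j}) = x_j$; and since $s'_k = s_k$, the same query is issued on the $g$-run, giving $s'_{k+1} = \stac{r_{i_j}}{g(x_j)} = \stac{r_{i_j}}{f(x_j)} = s_{k+1}$.

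Applying the induction up to $k = n$ yields $s'_n = s_n \in E$, so the run on $g$ also terminates, and $\Phi_\A(u,g) = \pi(s'_n) = \pi(s_n) = \Phi_\A(u,f)$, as required.

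The only mildly subtle point is bookkeeping: one has to verify that the $j$-th query state encountered in the $g$-run truly corresponds to $x_j$ in the original query sequence. But this is immediate from the induction hypothesis, which guarantees that the $g$-run is identical to the $f$-run through step $k$, so the query states encountered so far, and their indices among the query subsequence, are the same. I do not anticipate any genuine obstacle here; the real content of the proposition is the conceptual observation that oracle-dependence in a sequential algorithm is localised to the queries actually made on a given run.
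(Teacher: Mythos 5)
Your proposal is correct and follows essentially the same route as the paper: take the query sequence of $\A$ on $u$ and $f$ as the witnessing finite sequence, then show by induction on the length of the run that the runs on $f$ and $g$ coincide, splitting the inductive step into the non-query case (where $\rhd$ is oracle-independent) and the query case (where $f$ and $g$ agree on the queried value). The bookkeeping point you flag is handled correctly and is, if anything, spelled out more carefully than in the paper's own argument.
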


\begin{proof}
Let $x_0,\ldots,x_{m-1}$ be the query sequence of $\A$ on $u$ and $f$. We prove by induction on $i$ that if $s_0,\ldots,s_i$ is a run of $\A$ on $u$ and $f$ and $s'_0,\ldots,s'_i$ a run of $\A$ on $g$, then $s_k=s'_k$ for all $k<i$. For $i=0$ this is trivial, since $s_0=\rho_u=s'_0$. Otherwise, assuming that $s_i\rhd_f s_{i+1}$ and $s_i=s'_i$, we know that $s_i\notin E$, so there are two possibilities. Either $s_i\notin Q$ and so $s'_{i+1}=\rhd(s'_i)=\rhd(s_i)=s_{i+1}$ or $s'_{i+1}=\orhd_g(s_i)=\orhd_f(s_i)=s_{i+1}$ which follows from $g(\xi[s_i])=g(x_j)=f(x_j)=f(\xi[s_i])$ for some $x_j$ in the query sequence. Therefore if $\rho_u\rhd^\ast_f t\in E$ then also $\rho_u\rhd^\ast_g t\in E$ and thus $\Phi_\A(u,f)=\pi(t)=\Phi_\A(u,g)$.
\end{proof}

\begin{example}
The functional $\Phi:\NN\times\NN^\NN\to\NN$ defined by $\Phi(n,f):=\max_{i\leq n}f(i)$ is computed by the following OSA: $S:=\NN\times\NN$, $Q:=\{\stac{i,n}{\Box}\; | \; n>0\}$, $E:=\{\stac{i,n}{\Box}\; | \; n=0\}$, $\rho(n):=(0,n+1)$, $\xi(i,n):=n-1$ and $\pi\stac{i,n}{o}:=i$, with $\stac{i,n+1}{y}\rhd \stac{\max(i,y),n}{\Box}$. For example, $\Phi(2,f)$ would be computed as
\begin{equation*}
\begin{aligned}
&\stac{0,3}{\Box}\orhd \stac{0,3}{y_0}\rhd \stac{y_0,2}{\Box}\orhd \stac{y_0,2}{y_1}\rhd \stac{\max(y_0,y_1),1}{\Box}\\
&\orhd  \stac{\max(y_0,y_1),1}{y_2}\rhd \stac{\max(y_0,y_1,y_2),0}{\Box}\in E
\end{aligned}
\end{equation*}
where $y_0,y_1,y_2=f(2),f(1),f(0)$.

\end{example}

%%%%%%%%%%%%%%%%%%%%%%%%%%%%%%%%%%%%%%%%%%%%%%%%%
%%%%%%%%%%%%%%%%%%%%%%%%%%%%%%%%%%%%%%%%%%%%%%%%%
\section{Theorems and algorithms}
%%%%%%%%%%%%%%%%%%%%%%%%%%%%%%%%%%%%%%%%%%%%%%%%%
%%%%%%%%%%%%%%%%%%%%%%%%%%%%%%%%%%%%%%%%%%%%%%%%%
\label{sec-thmalg} 

We have introduced Kreisel's no counterexample interpretation, and defined a notion of an oracle sequential algorithm which computes functionals of n.c.i. realizer type. We now make the connection between these concepts precise, and give a number of definitions and results which develop it further.

While it is clear that we can view an OSA of sort $U,X,X,Y$ as a realizer of the n.c.i. of some formula $(\forall u\in U)(\exists x\in X)(\forall y\in Y) P(u,x,y)$ if $\Phi_\A:U\times Y^X\to X$ satisfies $(\forall u\in U, f:X\to Y)P(u,\Phi_\A(u,f),f(\Phi_\A(u,f)))$, we give a refinement of an OSA which makes the connection slightly cleaner, and will be crucial in Section \ref{sec-dc}.

\begin{definition}
\label{def-aa}
Let $\A=(R,Q,E,\rho,\xi,\pi,\rhd)$ be an OSA of sort $U,X,X,Y$ where:
\begin{enumerate}[(i)]

\item\label{item-aai} $\pi\stac{r}{o}=\xi(r)$;

\item\label{item-aaii} $E\subseteq S\backslash S_\Box$;

\item\label{item-aaiii} if $\stac{r}{o}\rhd \stac{r'}{y}$ with $y\in Y$ then $o=y$ and $\xi(r)=\xi(r')$.

\end{enumerate}
Then we call $\A$ an \emph{approximation algorithm} of sort $U,X,Y$.

\end{definition}
Intuitively, the first condition (\ref{item-aai}) combines the query and output locations of the state, the second ensures that end states always have the form $\stac{r}{y}$ for $y\in Y$, and the third results in the following property:
\begin{lemma}
\label{lem-nci}
Suppose that $\rho_u\rhd^\ast_f \stac{r}{o}$ for $u\in U$. Then if $o\in Y$ then $o=f(\xi(r))$.
\end{lemma}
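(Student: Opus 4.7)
The natural approach is induction on the length $n$ of the run $\rho_u = s_0 \rhd_f s_1 \rhd_f \cdots \rhd_f s_n = \stac{r}{o}$, with the inductive statement being exactly the claim of the lemma: if the terminal state has its answer register in $Y$ (rather than $\Box$), then that entry equals $f(\xi(r))$.

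The base case $n=0$ is vacuous, since $\rho_u = \stac{\rho(u)}{\Box}$ has $o = \Box \notin Y$. For the step, I would consider the last transition $s_{n-1} \rhd_f s_n$ and split on whether $s_{n-1} \in Q$ or not. If $s_{n-1} \in Q$, then by Definition \ref{def-orun} the last transition is an oracle transition $\orhd_f$, and by Definition \ref{def-oracle} we have $s_{n-1} = \stac{r}{\Box}$ and $s_n = \stac{r}{y}$ with $y = f(\xi(r))$; so $r$ is unchanged and $o = y = f(\xi(r))$, as required. If $s_{n-1} \notin Q$, then the last transition is a standard $\rhd$-transition producing $s_n = \stac{r}{o}$ with $o \in Y$; here I invoke clause (\ref{item-aaiii}) of Definition \ref{def-aa}, which gives that $s_{n-1}$ must already have the form $\stac{r'}{o}$ (the same $o$) with $\xi(r') = \xi(r)$. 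Since $s_{n-1}$ is reached by a shorter run and its answer register contains $o \in Y$, the inductive hypothesis applies and yields $o = f(\xi(r')) = f(\xi(r))$.

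There is no genuine obstacle here; the lemma is essentially a direct unpacking of the three closure properties imposed in Definition \ref{def-aa}. The only point requiring a little care is recognising that condition (\ref{item-aaiii}) is designed precisely to make the invariant ``$o \in Y \Rightarrow o = f(\xi(r))$'' propagate across non-oracle steps, so that between an oracle answer and the next query the algorithm is not permitted to silently overwrite either the answer or the query location. The oracle-transition case handles the act of \emph{installing} the invariant, and clause (\ref{item-aaiii}) handles its \emph{preservation}; clauses (\ref{item-aai}) and (\ref{item-aaii}) play no role in this particular lemma but will matter for its use when reading off the n.c.i.\ witness from an end state.
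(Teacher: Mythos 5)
Your proof is correct and follows essentially the same route as the paper's: induction on the length of the run, with the oracle-transition case establishing the invariant directly from Definition \ref{def-oracle} and the non-oracle case propagating it via condition (\ref{item-aaiii}) and the inductive hypothesis. The only cosmetic difference is that you treat the base case as vacuous while the paper notes it holds trivially because $\rho_u=\stac{\rho(u)}{\Box}$; these are the same observation.
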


\begin{proof}
Induction on runs. The only run of length $1$ is the single element $\rho_u$ and so the result trivially holds since $\rho_u=\stac{\rho(u)}{\Box}$. Now suppose that $s,\ldots,s_i,s_{i+1}=\stac{r}{o}$ is a run. If $s_i\in Q$ then $s_i\orhd_f \stac{r}{o}$ and so $s_i=\stac{r}{\Box}$ and $o=f(\xi(r))$. Otherwise, $s_i\rhd \stac{r}{o}$ and either $o=\Box$, in which case there is nothing to prove, or $o=y\in Y$, in which case $s_i=\stac{r'}{y}$ with $\xi(r')=\xi(r)$ by the condition (\ref{item-aaiii}). Since we assume inductively that $y=f(\xi(r'))$, we have $y=f(\xi(r))$. 
\end{proof}

\begin{definition}
\label{def-pred}
Let $P(u,x,y)$ be a predicate on $U,X,Y$. An approximation algorithm $(R,Q,E,\rho,\xi,\rhd)$ of sort $U,X,Y$ \emph{satisfies} $P$ on $u\in U$ and $f:X\to Y$ if
\begin{equation*}
\rho_u\rhd_f^\ast \stac{r}{y}\in E\mbox{ \ \ \ implies \ \ \ }P(u,\xi(r),y).
\end{equation*}

\end{definition}

\begin{theorem}
\label{thm-nci}
Suppose that the approximation algorithm $\A$ satisfies $P(u,x,y)$ and terminates on $\F$. Then the induced functional $\Phi_\A$ satisfies the n.c.i. of $\forall u\exists x\forall y P(u,x,y)$ on $\F$.
\end{theorem}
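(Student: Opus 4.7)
The plan is to unpack the three conditions in Definition \ref{def-aa} together with Lemma \ref{lem-nci} and show that the output of $\Phi_\A$ plugged back into the oracle $f$ is precisely what the approximation algorithm's "satisfies" condition delivers.

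First I would fix $u \in U$ and $f \in \F$. By the termination assumption and Proposition \ref{prop-oterm}, $\Phi_\A(u,f)$ is defined, so there is a run $\rho_u \rhd_f^\ast t$ with $t \in E$, and $\Phi_\A(u,f) = \pi(t)$ by Definition \ref{def-indfunc}. Next I would invoke condition (\ref{item-aaii}) of Definition \ref{def-aa} to conclude $t \notin S_\Box$, so $t$ has the form $\stac{r}{y}$ with $y \in Y$. Since the approximation algorithm satisfies $P$, Definition \ref{def-pred} gives $P(u, \xi(r), y)$.

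The remaining work is to identify $\xi(r)$ with $\Phi_\A(u,f)$ and $y$ with $f(\Phi_\A(u,f))$. The first identification is immediate from condition (\ref{item-aai}), which gives $\pi(t) = \pi\stac{r}{y} = \xi(r)$, so $\Phi_\A(u,f) = \xi(r)$. The second comes from Lemma \ref{lem-nci} applied to the run $\rho_u \rhd_f^\ast \stac{r}{y}$: since the second component $y$ lies in $Y$ rather than being $\Box$, we obtain $y = f(\xi(r)) = f(\Phi_\A(u,f))$. Substituting both identifications into $P(u, \xi(r), y)$ yields $P(u, \Phi_\A(u,f), f(\Phi_\A(u,f)))$, which is precisely the n.c.i. of $\forall u \exists x \forall y\, P(u,x,y)$ at $(u,f)$.

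There is no serious obstacle here; the result is essentially a bookkeeping consequence of how Definition \ref{def-aa} was engineered. The only subtle point is recognising that condition (\ref{item-aaiii}) of Definition \ref{def-aa} was precisely what made Lemma \ref{lem-nci} go through, so that the $y$ appearing in an end state is forced to be the genuine oracle value $f(\xi(r))$ even when it was not itself written into the state by an oracle transition, but propagated there by an ordinary $\rhd$-step.
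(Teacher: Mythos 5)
Your proof is correct and follows the same route as the paper's: termination gives a run to an end state $\stac{r}{y}$, the satisfaction condition gives $P(u,\xi(r),y)$, condition (\ref{item-aai}) identifies $\Phi_\A(u,f)$ with $\xi(r)$, and Lemma \ref{lem-nci} supplies $y=f(\xi(r))$. The paper's version is just a terser rendering of exactly this argument.
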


\begin{proof}
Take $f\in \F$. Then $\rho_u\rhd^\ast_f \stac{r}{y}$ with $P(u,\xi(r),y)$. But $\Phi_\A(u,f):=\xi(r)$ and by Lemma \ref{lem-nci} we have $y=f(\xi(x))=f(\Phi_\A(u,f))$. The result follows.
\end{proof}

\begin{corollary}
If $\A$ is computable, realizes $P$ and terminates, then $\Phi_\A$ is a computable functional which satisfies the n.c.i. of $\forall u\exists x\forall y P(u,x,y)$. 
\end{corollary}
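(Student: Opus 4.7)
The plan is to obtain the corollary as an immediate consequence of the two results immediately preceding it, namely Theorem \ref{thm-nci} and Proposition \ref{prop-ocomp}, interpreting the unqualified phrase ``$\A$ terminates'' as ``$\A$ terminates on $\F := Y^X$'', i.e.\ on every oracle $f:X\to Y$.

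First I would invoke Theorem \ref{thm-nci} with this choice of $\F$. Since $\A$ is assumed to satisfy (= realize) the predicate $P$ and to terminate on every $f$, Proposition \ref{prop-oterm} ensures that $\Phi_\A$ is total on $U \times Y^X$, and Theorem \ref{thm-nci} yields
\begin{equation*}
(\forall u \in U,\, f : X \to Y)\; P(u, \Phi_\A(u,f), f(\Phi_\A(u,f))),
\end{equation*}
which is exactly the n.c.i.\ of $\forall u\, \exists x\, \forall y\, P(u,x,y)$.

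Next I would apply Proposition \ref{prop-ocomp}: since $\A$ is computable in the sense of Definition \ref{def-ocomp}, the induced functional $\Phi_\A$ is computable by an oracle Turing machine. Combining these two facts gives the corollary.

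There is essentially no technical obstacle here; the only subtlety worth flagging is the reading of the word ``terminates'' in the hypothesis, which should be understood as terminating on the full class of oracles $Y^X$ so that Theorem \ref{thm-nci} applies to every $f$, rather than to some restricted $\F$. Once this reading is fixed the corollary is a direct corollary in the strict sense, with no additional argument required.
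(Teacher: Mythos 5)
Your proposal is correct and matches the paper's intent exactly: the corollary is stated without proof precisely because it follows immediately from Theorem \ref{thm-nci} (with $\F = Y^X$) together with Proposition \ref{prop-ocomp}, which is the combination you give. Your remark on reading ``terminates'' as terminating on the full oracle class is a reasonable clarification and consistent with the paper's conventions.
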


The obvious question now is how do we construct an appropriate algorithm $\A$ from a given proof? It is here that our approach departs from the standard pattern of 
\begin{quote}
\begin{center}
formal proof + soundness theorem $\mapsto$ realizing program
\end{center}
\end{quote}
which guarantees a realizing term from any proof. Our aim is to use oracle sequential algorithms to describe the operational behaviour of programs which realize the n.c.i. of theorems. To this end, rather than a full soundness proof, we focus on interpreting key patterns which often appear in proofs. As stated at the beginning, the idea is that a human being can produce simple algorithmic descriptions of mathematically simple parts of a proof, and then appeal to a number of formal constructions in order to deal with complicated parts of the proof, and such a construction, which deals with dependent choice, is the main result of the next section. Therefore the technique we illustrate in this paper is better described as
\begin{equation*}
\mbox{informal proof + constructions on algorithms $\mapsto$ realizing algorithm}
\end{equation*}
Nevertheless, this does not mean that it is not possible to formally extract algorithmic descriptions of realizing programs. Devices such as the Krivine machine \cite{Krivine(2009.0)} have already been developed for this purpose, where the reduction of extracted terms is animated through an abstract machine. More closely related to the work here is the variant of the functional interpretation developed by the author in \cite{Powell(2018.1)} which keeps track of case distinctions made by formally extracted realizing terms in a global state. In the next section, we pause and demonstrate how the reduction of type $2$ terms in System T can be captured by our sequential algorithms.

%%%%%%%%%%%%%%%%%%%%%%%%%%%%%%%%%%%%%%%%%%%%%%%%%
\subsection{Functionals in System T}
%%%%%%%%%%%%%%%%%%%%%%%%%%%%%%%%%%%%%%%%%%%%%%%%%
\label{sec-thmalg-syst} 

This section, which is inspired by a classic result which can be found in Troelstra \cite{Troelstra(1973.0)}, constitutes a short digression and can be skipped if desired. It contains a sketched proof of the following result:
\begin{proposition}
Suppose that $A:\equiv \forall u\exists x\forall y P(u,x,y)$ is provable in Peano arithmetic. Then there exists a computable and terminating approximation algorithm $\A$ that satisfies $P$, and which can be formally obtained from the proof of $A$.
\end{proposition}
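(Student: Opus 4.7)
The approach is to reduce to the Kreisel soundness theorem stated above: the hypothesised PA-proof of $A$ first yields a closed System~T term $t$ of type $U\to(X\to Y)\to X$ such that $\forall u,f\, P(u,tuf,f(tuf))$. It then remains to describe, uniformly in $t$, an approximation algorithm of sort $U,X,Y$ whose runs simulate the evaluation of $tuf$, forwarding to the oracle each subterm $f(\sigma)$ encountered in head position, and ensuring that the very last query issued is $f(tuf)$ — so that satisfaction of $P$ becomes a direct consequence of the specification of $t$.

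Concretely, I would take registers to be pairs $r=(e,x)$, where $e$ is a well-typed System~T expression over a single free variable $f:X\to Y$ and $x\in X$ is an accumulated candidate witness, with $\xi(e,x):=x$ and $\pi\stac{(e,x)}{o}:=x$ (so condition~(i) of Definition~\ref{def-aa} holds by fiat), and the input map $\rho(u):=(tuf,x_0)$ for an arbitrary default $x_0$. The transition $\rhd$ follows a deterministic weak head reduction strategy: so long as $e$ admits a head redex that is not blocked by the free variable $f$, set $\stac{(e,x)}{o}\rhd \stac{(e',x)}{\Box}$ where $e'$ is the one-step reduct; whenever the head redex is of the form $f(\sigma)$ for $\sigma$ a closed ground normal form, or when $e$ has itself reduced to a numeral $\sigma$, enter a query state $\stac{(e,\sigma)}{\Box}\in Q$; after the oracle writes $y=f(\sigma)$ onto the tape, the next $\rhd$-step either substitutes $y$ for the marked occurrence of $f(\sigma)$ (clearing the tape) or, if $e$ was already the numeral $\sigma$, declares the resulting state $\stac{(\sigma,\sigma)}{y}$ to lie in $E$. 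Because every non-oracle transition clears the tape, condition~(iii) is vacuously satisfied, while condition~(ii) holds since end states are produced only by oracle transitions.

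Three things then need to be checked: (a) computability of $\rho,\xi,\pi,\rhd$ and decidability of $Q,E$, which is clear from the purely syntactic nature of the construction; (b) termination on every $f$, which follows from strong normalisation of System~T at ground type, observing that oracle answers are numeral constants that create no new redexes, so only finitely many head-reduction steps separate consecutive queries; and (c) satisfaction of $P$, which reduces to showing that if a run ends in state $\stac{(\sigma,\sigma)}{y}$, then the cumulative effect of all oracle-mediated substitutions along the run has reduced $tuf$ to the numeral $\sigma$, so that $\sigma = tuf$ and $y = f(\sigma) = f(tuf)$, whence $P(u,\xi(r),y)$ follows from the defining property of $t$. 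The main obstacle I anticipate is~(c): while the intuition is clear, a rigorous proof of the correspondence between local head-reduction-with-oracle-substitution and the ordinary System~T reduction of the closed term $tuf$ requires a bisimulation-style argument along the lines of the classical treatment referenced in Troelstra~\cite{Troelstra(1973.0)}.
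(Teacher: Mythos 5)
Your proposal is correct and follows essentially the same route as the paper: extract a System~T realizer $t$ via the functional interpretation, then animate its reduction as an approximation algorithm whose registers carry the current term and query value, treating occurrences of $f$ applied to numerals as oracle reducts and issuing a final query at the resulting numeral so that the end state records $f(tuf)$. The paper likewise leaves the correspondence between the oracle-instrumented reduction and ordinary reduction at the level of a sketch, appealing to the same normalisation argument from Troelstra that you cite.
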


As outlined in \cite[Proposition 10.9]{Kohlenbach(2008.0)}, one can first use G\"{o}del's functional interpretation to extract a closed term $t:(0\to 0)\to 0\to 0$ of System T which realizes the n.c.i. of $P$ i.e. $\forall u,\phi^{0\to 0} P(u,t\phi u,\phi(t\phi u))$. 

We define $\A$ of sort $\NN,\NN,\NN$ as follows: Let $R:=\{\cq,\cc\}\times T_\phi\times\NN$, where $\cs$ and $\cc$ are just symbols and $T_\phi$ denotes the set of all terms of System $T$ of type $0$ which contain a single free variable parameter, along the lines of \cite[Theorem 2.3.9]{Troelstra(1973.0)}, we consider terms of the form $\phi\num{n}$ to be additional `oracle' reducts, where $\num{n}$ denotes the numeral representation of $n\in\NN$. 

We now let query states be those of the form $\stac{\cq,a,x}{\Box}$ and end states those of the form $\stac{\cq,\num{k},x}{y}$, with $\rho(u):=(\cc,t\phi\num{u},0)$ and $\xi(z,a,x):=x$. Finally, the transition relation $\rhd$ is defined as:
\begin{equation*}
\begin{aligned}
\stac{\cc,a,x}{o}&\rhd\begin{cases}\stac{\cc,b,x}{o} & \mbox{if $a\succ b$}\\ \stac{\cq,a,n}{\Box} & \mbox{if the next reduct of $a$ is an oracle reduct $\phi\num{n}$}\\
\stac{\cq,a,n}{\Box} & \mbox{if $a=\num{n}$}\end{cases} \\
\stac{\cq,a,x}{y}&\rhd \stac{\cc,a[\num{y}/\phi\num{n}],x}{y} \ \ \  \mbox{where the next reduct of $a$ is $\phi\num{n}$}
\end{aligned}
\end{equation*}
Now, suppose that $f:\NN\to\NN$ is some function, and we extend our reduction strategy $\succ$ to $\succ_f$, so that it also acts on oracle reducts as $\phi\num{n}\succ_f \num{m}$ for $m=f(n)$. We show that if $a\succ_f b$ then for any state $s=\stac{\cc,a,x}{o}$ we would have $s\rhd_f^\ast \stac{\cc,b,x'}{o'}$. If $a\succ b$ via an non-oracle reduction this is obvious. Otherwise, we would have
\begin{equation*}
\stac{\cc,a[\phi \num{n}],x}{o}\rhd \stac{\cq,a[\phi\num{n}],n}{\Box}\orhd_f \stac{\cq,a[\phi\num{n}],n}{m}\rhd \stac{\cc,a[\num{m}/\phi\num{n}],n}{m}
\end{equation*}
with $m=f(n)$. By an argument similar to \cite[Theorem 2.3.9]{Troelstra(1973.0)}, any term in $T_\phi$ and so in particular $t\phi\num{u}$ would reduce via $\succ_f$ to a normal form of type $0$ i.e. a numeral $\num{n}$. Thus we have
\begin{equation*}
\rho_u=\stac{\cc,t\phi\num{u},0}{\Box}\rhd_f^\ast \stac{\cc,\num{n},x}{o}\rhd \stac{\cq,\num{n},n}{\Box}\orhd_f \stac{\cq,\num{n},n}{y}\in E
\end{equation*}
where $\num{n}$ is the normal form of $t$ and thus, considering $t$ as a functional $(\NN\to\NN)\to\NN$ in the full set-theoretic model of System T, we have $n=tfu$, $m=f(n)$ and thus $P(u,n,m)$. Moreover, the algorithm is computable since the reduction of a System T term relative to some strategy can be viewed as a computable operation, given some encoding of terms.

%%%%%%%%%%%%%%%%%%%%%%%%%%%%%%%%%%%%%%%%%%%%%%%%%
\subsection{Control flow graphs and mind-changes}
%%%%%%%%%%%%%%%%%%%%%%%%%%%%%%%%%%%%%%%%%%%%%%%%%
\label{sec-thmalg-graphs}

We now introduce a number of auxiliary concepts can be used to enhance our understanding of approximation algorithms. We begin with the notion of a control flow graph, which is valid for any of the algorithm types we have discussed in this section.

\begin{definition}[Control flow graph]
\label{def-graph}
Let $\A$ be an algorithm with states $S$, $\V$ a set and $\pi:S\to \V$ a function.
\begin{itemize}

\item For a normal (non-oracle) sequential algorithm, a directed graph $\G=(\V,\E)$ is is a control flow graph for $\A$ w.r.t. $\pi$ if whenever
\begin{equation*}
s_0,s_1,\ldots,s_n
\end{equation*}
is a run in $\A$ from some $s_0\in I$ then $(\pi(s_i),\pi(s_{i+1}))\in \E$ for all $i<n$.

\item For an oracle sequential algorithm, the labelled directed graph $\G=(\V,\E\cup \E')$ is a control flow graph for $\A$ w.r.t. $\pi$ if whenever
\begin{equation*}
s_0,s_1,\ldots,s_n
\end{equation*}
is a run in $\A$ on some oracle $f:X\to Y$ and some $s_0\in I_\A$ then $(\pi(s_i),\pi(s_{i+1}))\in A$ for all $i<n$ with $s_i\notin Q$, and $(\pi(s_i),\pi(s_{i+1}))\in A'$ for all $i<n$ with $s_i\in Q$.

\end{itemize}
\end{definition}

There is another useful concept which will apply to approximation algorithms.

\begin{definition}
Suppose that $\A$ is an approximation algorithm terminating on $\F$.
\begin{itemize}

\item If $x_0,\ldots,x_{m-1}$ is the query sequence of $\A$ on $u$ and $f$, we say that $\A$ requires $m-1$ mind-changes on $u$ and $f$.

\item We say that $\A$ requires at most $n$ mind-changes on $\F$ if it requires $\leq n$ mind changes for any $u$, $f\in F$.

\item We say that $\A$ is descending on $\F$ w.r.t. $\succ$ if whenever $s\rhd t$ then $\xi(s)\succeq \xi(t)$.

\end{itemize}
\end{definition}

%%%%%%%%%%%%%%%%%%%%%%%%%%%%%%%%%%%%%%%%%%%%%%%%%
\subsection{Multiple oracles and redundant parameters}
%%%%%%%%%%%%%%%%%%%%%%%%%%%%%%%%%%%%%%%%%%%%%%%%%
\label{sec-alg-mult}

Later it will be helpful to consider algorithms which computes functionals of type $U\times Y_1^X\times Y_2^X\to X$. While we could just encode this as a functional of type $U\times (Y_1\times Y_2)^X\to X$, for the purpose of efficiency and also intuition, it will be helpful to consider two separate oracles which can be queried independently, in which case we say that our algorithm has sort $U,X,[Y_1,Y_2]$.

We do not give a rigorous definition of the resulting extension of our oracle algorithms since this is fairly obvious. States would have the form $C\times U\times X\times Y_1\times Y_2$ while for each oracle we require separate subsets $Q_1,Q_2\subset S$ of query states with $Q_1\cap Q_2=\emptyset$ and $s\in Q_i\Rightarrow y_i[s]=\Box$. The second rule on oracle transitions would extend to at least one of the $y_i[s]$ being nonempty. Finally, there would be two transition relations $\orhd_i:Q_i\to S$ corresponding to each oracle.

In some cases we will consider theorems of the form $(\exists x)(\forall y) P(x,y)$, which are not dependent on some outer parameter $U$. Formally, these will be interpreted by OSAs of sort $1,X,Y$, where $1$ denotes some one element set (i.e. terminal object).

%%%%%%%%%%%%%%%%%%%%%%%%%%%%%%%%%%%%%%%%%%%%%%%%%
\subsection{Example: The least element principle}
%%%%%%%%%%%%%%%%%%%%%%%%%%%%%%%%%%%%%%%%%%%%%%%%%
\label{sec-thmalg-ex} 

We now give a concrete example which illustrates all of the above phenomena. Let $Q(x)$ be some decidable predicate on the set $X$, and assume this is equipped with a well-founded, decidable relation $>$. Then using classical logic plus induction over $>$, we can prove the following minimum principle:
\begin{equation*}
(\exists u\in X) Q(u)\to (\exists x\in X)(Q(x)\wedge (\forall y<x)\neg Q(y)).
\end{equation*}
Now, this is constructively equivalent to the following formula in prenex form:
\begin{equation*}
(\forall u\in X)(\exists x\in X)(\forall y\in X)(\underbrace{Q(u)\to Q(x)\wedge (y<x\to \neg Q(y))}_{P(u,x,y)}).
\end{equation*}
Let $P(u,x,y)$ be defined as indicated, and let the approximation algorithm $\A=(R,Q,E,\rho,\xi,\rhd)$ of sort $X,X,X$ be given as follows: 
\begin{itemize}

\item $R:=\{\cs,\ce\}\times X$ where $\cs$ and $\ce$ are symbols;

\item $Q$ consists of states of the form $\stac{\cs,x}{\Box}$;

\item $E$ contains states of the form $\stac{\ce,x}{y}$;

\item $\rho(u):=(\cs,u)$ and $\xi(c,x):=x$;

\end{itemize}
together with the single transition 
\begin{equation*}
\stac{\cs,x}{y}\rhd\begin{cases}\stac{\cs,y}{\Box} & \mbox{if $y<x\wedge Q(y)$}\\ \stac{\ce,x}{y} & \mbox{otherwise}.\end{cases}
\end{equation*}
\begin{proposition}
\label{prop-ex}
$\A$ satisfies $P(u,x,y)$ and terminates.
\end{proposition}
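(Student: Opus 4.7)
The plan is to verify termination and satisfaction of $P$ separately, by direct structural analysis of runs, using well-foundedness of $<$ for the first and a simple run-invariant for the second.

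For termination, I would observe that the only transition that changes the first register coordinate is $\stac{\cs,x}{y}\rhd \stac{\cs,y}{\Box}$, and it fires only when $y<x$. Hence along any run starting from $\rho_u$, the subsequence of first coordinates of $\cs$-states is strictly $<$-decreasing; since $<$ is well-founded on $X$, after finitely many such decreases the guard $y<x\wedge Q(y)$ must fail, forcing the alternative branch $\stac{\cs,x}{y}\rhd \stac{\ce,x}{y}$ and producing an end state. (One also needs to note that $\cs$-states alternate with an oracle transition $\stac{\cs,x}{\Box}\orhd_f \stac{\cs,x}{y}$, so the run cannot stall in a query state.)

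For satisfaction, suppose $\rho_u\rhd^\ast_f \stac{\ce,x}{y}\in E$, so that $\xi(r)=x$. By inspection of the transition rules, the end state can only have arisen from the branch $\stac{\cs,x}{y}\rhd \stac{\ce,x}{y}$ whose negated guard immediately yields $y<x\to \neg Q(y)$, giving one conjunct of $P(u,x,y)$. For the remaining conjunct $Q(u)\to Q(x)$, I would establish the invariant that in every $\cs$-state $\stac{\cs,x'}{o}$ appearing in the run, either $x'=u$ or $Q(x')$ holds. This goes by induction on the run: initially $\rho_u$ has $x'=u$; and the only transition altering the first coordinate is $\stac{\cs,x''}{y'}\rhd \stac{\cs,y'}{\Box}$, whose guard explicitly requires $Q(y')$, so the property propagates. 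Applied to the last $\cs$-state before termination, the invariant yields either $x=u$ (whence $Q(u)\to Q(x)$ is trivial) or $Q(x)$ outright.

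Neither step looks like a real obstacle: the construction is engineered so that termination corresponds precisely to well-foundedness of $<$, and the invariant mirrors the intended reading of the algorithm as maintaining a ``$Q$-valid candidate minimum''. The only mildly delicate point is the case split in the invariant at the very short run with no mind-change, where the value $x=u$ persists and $Q(x)$ is only available under the hypothesis $Q(u)$ — which is exactly why $P$ is stated as an implication rather than a conjunction, and explains why the minimum principle requires the antecedent $(\exists u)Q(u)$ in its original form.
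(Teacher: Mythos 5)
Your proposal is correct and follows essentially the same route as the paper: termination via the absence of an infinite $>$-descending chain of first coordinates, and satisfaction of $P$ via a run-invariant guaranteeing the candidate in the query state satisfies $Q$ (your disjunctive form ``$x'=u$ or $Q(x')$'' is just a mild repackaging of the paper's step of assuming $Q(u)$ and propagating $Q$ along the run), combined with the negated guard of the terminating branch yielding $y<x\to\neg Q(y)$.
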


\begin{proof}
If $\A$ does not terminate on some $u$, we could generate an infinite sequence $u_>u_1>u_2>\ldots$ such that
\begin{equation*}
\stac{\cs,u}{\Box}\rhd^\ast_\omega \stac{\cs,u_i}{\Box}
\end{equation*}
for all $i$, contradicting well-foundedness of $>$. To show that the algorithm satisfies $P$, pick some $u\in X$ and assume that $P(u)$ holds (else the result is trivial). It is easy to show by induction on the run that whenever $\stac{\cs,u}{\Box}\rhd^\ast\stac{\cs,x}{\Box}$ then $P(x)$ holds. But since $\A$ terminates, we end up in some end state $\stac{\cs,x}{y}$ with $P(x)$ and $y<x\to \neg Q(y)$.
\end{proof}

The above program is an animation of the standard solution to the n.c.i. of the least element principle, which would be given by some lambda satisfying the following recursive equation:
\begin{equation*}
\Phi(f,x)=\mbox{ if $x>fx\wedge Q(fx)$ then $\Phi(f,fx)$ else $x$}.
\end{equation*}
Note that in general extracted terms are much more complicated than this, and as such we could use our framework to produce much richer algorithmic descriptions these terms, which abstract away inessential parts of the program and focus on the salient features.

We can associate to this approximation algorithm a control flow graph, which tracks the basic structure of the computation: We set $\Sigma:=\{\cs,\css,\ce\}$ and define $\pi:S\to \Sigma$ by
\begin{equation*}
\pi\stac{c,x}{y}:=c\mbox{ for $c\in \{\cs,\ce\}$, and }\pi\stac{\cs,x}{\Box}:=\css.
\end{equation*}
Then the directed graph $G=(\Omega,A\cup A')$, where $\cs A\css$, $\cs A\ce$ and $\css A'\cs$, i.e.
\begin{equation*}\xymatrix{*++[o][F]{\css}\ar@/^1.5pc/@{..>}[r] & *++[o][F]{\cs}\ar@/^1.5pc/^{}[l]\ar[r] & *++[o][F]{\ce}}\end{equation*}
is a control flow graph for $\A$. Moreover, $\A$ is descending w.r.t. $>$ and requires at most one mind change.

%%%%%%%%%%%%%%%%%%%%%%%%%%%%%%%%%%%%%%%%%%%%%%%%%
\subsection{The n.c.i in action: Extracting witnesses from non-constructive proofs}
%%%%%%%%%%%%%%%%%%%%%%%%%%%%%%%%%%%%%%%%%%%%%%%%%
\label{sec-thmalg-imp} 

In this section we discuss a very simple proof pattern, in which some non-constructive principle $A$ is used to prove an existential statement $\exists v Q(v)$. We indicate how $A$ can be replaced by its n.c.i. in the proof, and how an approximation algorithm for $A$ can be turned into a sequential algorithm for computing $v$. For simplicity we focus the case where $A:\equiv (\exists x\in X)(\forall y\in Y) P(x,y)$ for decidable $P$. Our aim is not a comprehensive study of program extraction using the n.c.i. (or closely related Dialectica interpretation). But our example covers a great number of non-constructive proofs one finds in everyday mathematics, particularly abstract algebra, in which one often a single instance of e.g. a maximal idea is combined with some elementary reasoning to prove the existence of something concrete.\medskip

\noindent\textbf{Theorem:} There exists some $v\in V$ such that $Q(v)$ holds.

\begin{proof}[\textbf{Proof pattern:}] Take a relevant instance $\exists x\in X\forall y\in Y P(x,y)$ of our non-constructive principle, and show that this combined with $\forall v \neg Q(v)$ leads to a contradiction.\end{proof}

\noindent Logically speaking, the body of the proof is the following implication:
\begin{equation*}
\exists x\forall y P(x,y)\wedge \forall v\neg Q(v)\to \bot
\end{equation*}
which is of course equivalent to
\begin{equation*}
(\ast) \ \ \ \exists x\forall y P(x,y)\to \exists v Q(v)
\end{equation*}
Thus if there exists some $x$ satisfying $\forall y P(x,y)$ then we can infer $\exists v Q(v)$. The problem, computationally speaking, is that we cannot necessarily \emph{construct} this $x$. However, we can typically find a pair of functions $f:X\to Y$ and $g:X\to V$ such that
\begin{equation*}
(\ast\ast) \ \ \ \forall x\left(P(x,f(x))\to Q(g(x))\right).
\end{equation*}
In fact, when $(\ast)$ is provable in e.g. Peano arithmetic, the functional interpretation guarantees that we are able to do so. Now supposing we are able to find a functional $\Phi:Y^X\to X$ satisfying
\begin{equation*}
P(\Phi f,f(\Phi f)).
\end{equation*}
Then it follows that $Q(v)$ holds for $v:=g(\Phi(f))$. We can give an algorithmic version if this as follows.

\begin{proposition}
\label{prop-imp}
Suppose that $\A$ of sort $1,X,Y$ is an approximation machine that terminates on $\F$ and satisfies $P(x,y)$, and that $f\in \F$ and $g:X\to V$ satisfy $(\ast\ast)$. Then $Q(v)$ holds for $v:=(g\circ \xi)(r)$ where
\begin{equation*}
\rho\rhd_f^\ast \stac{r}{y}\in E_\A.
\end{equation*}
\end{proposition}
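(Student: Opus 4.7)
The plan is a direct unpacking of Theorem \ref{thm-nci} into the concrete form provided by the assumption $(\ast\ast)$. The idea is that $\A$ delivers a realizer $\Phi_\A$ for the n.c.i.\ of $\exists x \forall y P(x,y)$, which when combined with the implication $(\ast\ast)$ immediately yields a witness $v = g(\Phi_\A(f))$ to $\exists v\, Q(v)$. What remains is only to translate this into the stated form in terms of the end-state register $r$, which amounts to identifying $\xi(r)$ with $\Phi_\A(f)$.

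First, since $\A$ satisfies $P$ on $\F$ in the sense of Definition \ref{def-pred} and the run $\rho \rhd_f^\ast \stac{r}{y} \in E_\A$ is given as a hypothesis, we obtain $P(\xi(r), y)$. Second, because $\A$ is an approximation algorithm, clause (ii) of Definition \ref{def-aa} guarantees $E_\A \subseteq S \setminus S_\Box$, so the tape component $y$ of the end state lies in $Y$; hence Lemma \ref{lem-nci} applies to the run and forces $y = f(\xi(r))$. Substituting yields $P(\xi(r), f(\xi(r)))$.

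Finally, instantiating $(\ast\ast)$ at $x := \xi(r)$ gives $P(\xi(r), f(\xi(r))) \to Q(g(\xi(r)))$, and modus ponens delivers $Q(g(\xi(r))) = Q(v)$, which is the required conclusion. I do not anticipate any real obstacle: the proposition is essentially a restatement of the content of Theorem \ref{thm-nci} pipelined through the auxiliary map $g$ supplied by the informal proof of the target existential statement, with Lemma \ref{lem-nci} doing the small amount of bookkeeping needed to relate the raw end-state data $(r,y)$ to the functional $\Phi_\A$.
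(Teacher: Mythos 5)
Your proof is correct and follows essentially the same route as the paper's: the paper likewise obtains $P(\xi(r),y)$ from the satisfaction condition, identifies $y=f(\xi(r))$ via the approximation-algorithm property, and then applies $(\ast\ast)$. Your version merely spells out the appeal to Lemma \ref{lem-nci} and clause (ii) of Definition \ref{def-aa} explicitly, which the paper leaves implicit.
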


\begin{proof}
Straightforward: If $\rho\rhd_f^\ast \stac{r}{y}$ then $P(x,y)$ holds for $x=\xi(r)$ and $y=f(\xi(r))$, and thus by $(\ast\ast)$ we have $Q(g(\xi(r)))$.
\end{proof}

As such, if $\stac{r_0}{\Box},\stac{r_1}{o_1},\ldots,\stac{r_{k-1}}{o_{k-1}},\stac{r_k}{y}\in E$ is a run in $\A$ on $f$ we can view the sequence
\begin{equation*}
(g\circ \xi)(r_0),\ldots,(g\circ \xi)(r_k)
\end{equation*}
as an algorithmic computation of a witness for $\exists v Q(v)$.

%%%%%%%%%%%%%%%%%%%%%%%%%%%%%%%%%%%%%%%%%%%%%%%%%
%%%%%%%%%%%%%%%%%%%%%%%%%%%%%%%%%%%%%%%%%%%%%%%%%
\section{An algorithmic interpretation of dependent choice}
%%%%%%%%%%%%%%%%%%%%%%%%%%%%%%%%%%%%%%%%%%%%%%%%%
%%%%%%%%%%%%%%%%%%%%%%%%%%%%%%%%%%%%%%%%%%%%%%%%%
\label{sec-dc}

In this section we give our main technical result: A formal translation on algorithms which acts as a computational interpretation of the rule of dependent choice. This result is interesting in its own right: While inspired by Spector's bar recursion, particularly the variant on finite sequences studied in \cite{OliPow(2012.2)}, our translation is not based on any predetermined formal system - everything works in our very general setting where algorithms are just transition relations which act on some state. In particular, we separate correctness and termination, first showing that our algorithm for dependent choice is correct whenever it terminates, and only then showing that it indeed terminates for a reasonable class of oracles. More importantly though, and unlike bar recursion, our interpretation gives a clear algorithmic meaning to dependent choice: Our construction results in a sequential algorithm which clearly builds an approximation to a choice sequence step-by-step, and can thus be used to better understand the computational content of proofs in analysis or abstract algebra which use make use of choice in some way. In particular, we are able to give a graph-theoretic formulation of our algorithm, which described how the transformation acts on call graphs. We give a concrete illustration of this in Section \ref{sec-tape}.

Let's first outline the problem we want to solve. We need some notation:
\begin{definition}
\label{def-seq}
For any set $X$, let $X^\ast$ denote the set of finite sequences over elements of $X$, where $\seq{}\in X^\ast$ is the empty sequence. The length of $a\in X^\ast$ is denoted by $|a|$, and the extension of $a$ with some $x\in X$ resp. $b\in X^\ast$ is denoted $a::x$ resp. $a::b$. Given an \emph{infinite} sequence $\alpha:X^\NN$ and $n\in\NN$ we define $\initSeg{\alpha}{n}\in X^\ast$ to be the initial segment of $\alpha$ of length $n$ i.e. $\initSeg{\alpha}{n}:=\seq{\alpha_0,\ldots,\alpha_{n-1}}$. 
\end{definition}
We aim to interpret the following rule of dependent choice restricted to $\Pi_1$ formulas: If for some predicate $P(u,x,y)$ we have
\begin{equation*}
(\forall u\in X^\ast)(\exists x\in X)(\forall y\in Y) P(u,x,y)
\end{equation*}
then we can infer
\begin{equation*}
(\exists \alpha:\NN\to X)(\forall n\in\NN,y\in Y)P(\initSeg{\alpha}{n},\alpha_n,y).
\end{equation*}
Note that even for $X=Y=\NN$, simple classical logic combined with this rule form of dependent choice is strong enough to give us arithmetical comprehension for $\Pi_1$-formulas, and also simple formulations of Zorn's lemma, single instances of which are used to prove many strong analytic principles. 

To interpret this rule in our framework of sequential algorithms, we mean the following: Given an approximation algorithm $\A$ of sort $X^\ast,X,Y$ satisfying $P(u,x,y)$, can we construct, in a uniform way, an approximation algorithm $\A_\omega$ of sort $1,X^\NN,[\NN,Y]$ which satisfies 
\begin{equation*}
P_\omega(\alpha,[n,y]):\equiv P(\initSeg{\alpha}{n},\alpha_n,y).
\end{equation*}
Of course, we could just define $\A_\omega$ so that it does not terminate on any input, in which case it trivially satisfies any formula! Therefore we also require the following property: Whenever $\A$ terminates then $\A_\omega$ terminates under some reasonable conditions. The latter will be established in Theorem \ref{thm-term}. But first to the definition of $\A_\omega$:
\begin{definition}
\label{def-baralg}
Suppose we are given some approximation algorithm $\A=(R,Q,E,\rho,\xi,\rhd)$ of sort $X^\ast,X,Y$. We define $\A_\omega=(R_\omega,Q_1,Q_2,E_\omega,\rho_\omega,\xi_\omega,\rhd_\omega)$ of sort $1,X^\NN,[\NN,Y]$ as follows:
\begin{itemize}

\item $R_\omega:=R^\ast\times X^\ast$. States will be written as $\stac{\sigma,a}{o_1,o_2}$ for $\sigma\in R^\ast, a\in X^\ast$, $o_1\in \NN_\Box$ and $o_2\in Y_\Box$;

\item $\rho_\omega(1):=([\rho([])],[])$ and thus the single initial state is $\stac{[\rho([])],[]}{\Box,\Box}$, which we denote by $\rho_\omega$;

\item $Q_1$ comprises states of the form $\stac{\sigma::r,[]}{\Box,\Box}$ with $\stac{r}{\Box}\in Q$;

\item $Q_2$ comprises states of the form $\stac{\sigma,[]}{n,\Box}$ with $n<|\sigma|$;

\item $E_\omega$ comprises states of the form $\stac{[],a}{n,y}$;

\item $\xi_\omega(\sigma,a):=\xi^\ast(\sigma)::a::\zero\in X^\NN$ where $\xi^\ast(\seq{\sigma_0,\ldots,\sigma_{k-1}}):=\seq{\xi(\sigma_0),\ldots,\xi(\sigma_{k-1})}$ and $\zero:=0,0,\ldots$.

\end{itemize}
Finally, the transitions $\rhd_\omega$ are given by three rule types:
\begin{enumerate}[(a)]

\item\label{item-ta} $\stac{\sigma,[]}{n,\Box}\rhd_\omega\stac{\sigma::\rho(\xi^\ast(\sigma)),[]}{\Box,\Box}$;

\item\label{item-tb} $\stac{\sigma::r,a}{n,y}\rhd_\omega \stac{\sigma,\xi(r)::a}{n,y}$ if $\stac{r}{y}\in E$;

\item\label{item-tc}
\begin{enumerate}[(i)]

\item\label{item-tci} $\stac{\sigma::r,[]}{\Box,\Box}\rhd_\omega \stac{\sigma::r',[]}{\Box,\Box}$ if $\stac{r}{\Box}\rhd \stac{r'}{\Box}$,

\item\label{item-tcii} $\stac{\sigma::r,a}{n,y}\rhd_\omega \stac{\sigma::r',[]}{\Box,\Box}$ if $\stac{r}{y}\rhd \stac{r'}{\Box}$,

\item\label{item-tciii} $\stac{\sigma::r,a}{n,y}\rhd_\omega \stac{\sigma::r',a}{n,y}$ if $\stac{r}{y}\rhd \stac{r'}{y}$.

\end{enumerate}
\end{enumerate}

\end{definition}

\begin{remark}
Note that $\A_\omega$ has multiple oracles, as outlined in Section \ref{sec-alg-mult}. In particular, we denote the two oracle transitions by $\orhd_{i}$ for $i=1,2$, where as usual we suppress the dependency on the particular oracle.
\end{remark}

\begin{lemma}
\label{lem-baraa}
$\A_\omega$ is an approximation algorithm.
\end{lemma}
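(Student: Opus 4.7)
The plan is a straightforward case-by-case verification of the three defining conditions of an approximation algorithm (Definition \ref{def-aa}), with the minor caveat that $\A_\omega$ lives in the multi-oracle setting of Section \ref{sec-alg-mult}, so the conditions must be read with $S_\Box$ now meaning states whose \emph{both} answer tapes are empty, and the answer-preservation clause of (iii) applying componentwise.

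Condition (i) is by convention: I would introduce $\pi_\omega$ to equal $\xi_\omega$ on every state, so $\pi_\omega\stac{\sigma,a}{o_1,o_2} := \xi^\ast(\sigma)::a::\zero$. Condition (ii) is immediate from the definition of $E_\omega$, since any $\stac{[],a}{n,y}\in E_\omega$ has $n\in\NN$ and $y\in Y$ and so is certainly not in $S_\Box$. Condition (iii) is the only point requiring any real work, and I would dispatch it by inspecting the transition rules of Definition \ref{def-baralg} in turn: rules (\ref{item-ta}), (\ref{item-tci}) and (\ref{item-tcii}) all produce targets with answer tape $(\Box,\Box)$, so the condition is vacuous. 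Rule (\ref{item-tb}) takes source $\stac{\sigma::r,a}{n,y}$ to target $\stac{\sigma,\xi(r)::a}{n,y}$; the answer tape $(n,y)$ is preserved, and unfolding $\xi^\ast(\sigma::r) = \xi^\ast(\sigma)::\xi(r)$ gives $\xi_\omega(\sigma::r,a) = \xi_\omega(\sigma,\xi(r)::a)$. Rule (\ref{item-tciii}) similarly preserves the answer tape $(n,y)$, and here I would appeal to condition (iii) for the underlying $\A$ applied to $\stac{r}{y}\rhd\stac{r'}{y}$ to obtain $\xi(r)=\xi(r')$, from which $\xi_\omega(\sigma::r,a)=\xi_\omega(\sigma::r',a)$ follows.

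I do not anticipate any real obstacle: the lemma is essentially a sanity check that the transition rules of $\A_\omega$ were designed to respect the bookkeeping invariants of Definition \ref{def-aa}. The only mildly delicate point is the single step that invokes condition (iii) of the input algorithm $\A$, which is precisely what forces the split of rule (\ref{item-tc}) into its three subcases rather than being bundled into a single uniform non-oracle transition.
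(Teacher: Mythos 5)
Your proof is correct and follows essentially the same route as the paper's: the paper likewise dismisses conditions (i) and (ii) as immediate and verifies condition (iii) only for rules (\ref{item-tb}) and (\ref{item-tciii}), using the identity $\xi^\ast(\sigma::r)=\xi^\ast(\sigma)::\xi(r)$ for the former and the approximation-algorithm property $\xi(r)=\xi(r')$ of $\A$ for the latter. Your additional remarks on the multi-oracle reading of $S_\Box$ and the vacuity of the remaining rules are accurate but add nothing beyond what the paper leaves implicit.
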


\begin{proof}
We only need to establish condition (\ref{item-aaiii}) of Definition \ref{def-aa}. There are two transition rules to check: for (\ref{item-tb}) we have 
\begin{equation*}
\xi_\omega(\sigma::r,a)=\xi^\ast(\sigma)::\xi(r)::a=\xi_\omega(\sigma,\xi(r)::a)
\end{equation*}
while for (\ref{item-tciii}) we have
\begin{equation*}
\xi_\omega(\sigma::r,a)=\xi^\ast(\sigma)::\xi(r)::a=\xi^\ast(\sigma)::\xi(r')::a=\xi_\omega(\sigma::r',a)
\end{equation*}
using the fact that $\A$ is an approximation algorithm and hence $\xi(r)=\xi(r')$.
\end{proof}

\begin{lemma}
\label{lem-barcomp}
If $\A$ is computable then so is $\A_\omega$.
\end{lemma}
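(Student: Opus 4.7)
The plan is to verify the four requirements of Definition \ref{def-ocomp} one by one for $\A_\omega$, reducing each to the corresponding fact for $\A$ together with standard manipulations of finite sequences. Representations of the new registers $R_\omega = R^\ast \times X^\ast$ and of the answer tape $\NN_\Box \times Y_\Box$ are obtained by lifting the given representations of $R$, $X$, $\NN$ and $Y$ through primitive-recursive codes for finite sequences and tagged sums, which is entirely routine.

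Decidability of $E_\omega$ reduces to checking that $\sigma$ is empty and that both answer-tape slots are filled; decidability of $Q_1$ adds a call to the decision procedure for $Q$ in $\A$; and decidability of $Q_2$ only needs the comparison $n<|\sigma|$. Computability of $\rho_\omega$ is immediate from that of $\rho$, and $\xi_\omega(\sigma,a)$ is computed by applying $\xi$ componentwise along $\sigma$ and concatenating with $a$ and $\zero$. For the transition function $\rhd_\omega$, given any input state one first decides which of the five clauses (\ref{item-ta}), (\ref{item-tb}), (\ref{item-tci}), (\ref{item-tcii}) or (\ref{item-tciii}) applies, which requires only pattern-matching on the shape of the state plus one membership test in $E$ (for clause (\ref{item-tb})) and one call to $\rhd$ of $\A$ (for the subcases of (\ref{item-tc})); the output state is then assembled by simple list concatenation with the data returned by $\rho$, $\xi$ or $\rhd$ of $\A$.

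The only conceptual subtlety, and the one I would flag as the main obstacle, is that the queries produced by $\A_\omega$ live in the uncountable set $X^\NN$, for which there is no uniform finite encoding. However, every query actually generated during a run has the form $\xi^\ast(\sigma)::a::\zero$ and is therefore determined by the finite code $(\xi^\ast(\sigma),a)$; passing this code to the oracle (or equivalently, a program that computes the sequence on demand) is the natural realisation of the query, and is consistent with the deliberately informal approach to computability taken in the paper following Proposition \ref{prop-ocomp}. Since no step of the construction introduces any new non-effective ingredient beyond those already present in $\A$, the conclusion follows.
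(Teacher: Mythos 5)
Your proposal is correct and follows essentially the same route as the paper's own (much terser) sketch: decidability of $Q_1$, $Q_2$ and $E_\omega$ via length checks plus membership of $Q$, selection of the applicable transition clause via the shape of the state plus membership of $E$, and assembly of the successor state from $\rho$, $\xi$ and $\rhd$ of $\A$. Your additional remark on representing the queries in $X^\NN$ by their finite codes $(\xi^\ast(\sigma),a)$ is a sensible clarification that the paper leaves implicit, but it does not change the argument.
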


\begin{proof}
This is clear and we only sketch the proof: Membership of $Q_1,Q_2$ and $E_\omega$ only involves checking the lengths $|\sigma|$ and $|a|$ together with membership of $Q$. Similarly, establishing which transition rule applies is a matter of lengths together with membership of $E$, and $\rhd_\omega$ then only involves simple manipulations of the state, potentially using the computable functions $\rho$ and $\xi$, or the partial computable transition relation $\rhd$.
\end{proof}

We now prove that whenever $\A$ satisfies $P$, then $\A_\omega$ satisfies $P_\omega$. We do this by induction on the run, and we first need a few lemmas.

\begin{definition}
\label{def-level}
We say that a transition acting on state $\stac{\sigma,a}{o_1,o_2}$ has level $m$ if it either an oracle transition or an instance of rule (\ref{item-ta}), and $|\sigma|=m$, or it is an instance of rules (\ref{item-tb})-(\ref{item-tc}) and $|\sigma|=m+1$.
\end{definition}

\begin{lemma}
\label{lem-dc1}
Define $s_\sigma:=\stac{\sigma,[]}{\Box,\Box}$ and fix some oracles for $\A_\omega$. Suppose that $s_\sigma\rhd^\ast_\omega \stac{\sigma::\tau,a}{o_1,o_2}$ where all transitions in the run have level $\geq |\sigma|$. Then whenever $\xi^\ast(\sigma)=\xi^\ast(\sigma')$ we have $s_{\sigma'}\rhd^\ast_\omega \stac{\sigma'::\tau,a}{o_1,o_2}$.
\end{lemma}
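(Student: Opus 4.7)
The plan is to prove this by induction on the length of the run from $s_\sigma$, keeping track of a shift operation $\Psi$ on states that sends $\stac{\sigma :: \tau_0, a_0}{o'_1, o'_2}$ to $\stac{\sigma' :: \tau_0, a_0}{o'_1, o'_2}$. Note that $\xi^\ast(\sigma) = \xi^\ast(\sigma')$ already implies $|\sigma| = |\sigma'|$, so $\Psi$ preserves the structural shape of states. What I want to show by induction is that every state $t$ appearing in the given run has base sequence starting with $\sigma$, and that $s_{\sigma'} \rhd^\ast_\omega \Psi(t)$ via a run of the same length. The base case is immediate, since $\Psi(s_\sigma) = s_{\sigma'}$.

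For the inductive step, I would examine the next transition $t \rhd_\omega t'$ (or $t \orhd_i t'$) and argue case by case on the rule of Definition \ref{def-baralg} used. The level condition $\geq |\sigma|$ is precisely what guarantees the $\sigma$-prefix of the base is never popped off, so that $\Psi(t')$ remains well-defined and the induction can continue. Rules (\ref{item-tb}) and (\ref{item-tc}) are the easy cases: they depend only on the topmost register, the answer tape, and the internal transitions $\rhd$ and end-state membership of $\A$, none of which sees the $\sigma$-prefix at all, so $\Psi(t) \rhd_\omega \Psi(t')$ follows by direct inspection.

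The substantive cases are rule (\ref{item-ta}) and the oracle transitions $\orhd_1, \orhd_2$, each of which explicitly reads $\xi^\ast$ or $\xi_\omega$ from the state. Here the pivotal calculation, and the place where I expect the real work of the proof to sit, is the splitting
\begin{equation*}
\xi^\ast(\sigma :: \tau_0) \;=\; \xi^\ast(\sigma) :: \xi^\ast(\tau_0) \;=\; \xi^\ast(\sigma') :: \xi^\ast(\tau_0) \;=\; \xi^\ast(\sigma' :: \tau_0),
\end{equation*}
which uses the hypothesis essentially and shows in a single stroke that (i) rule (\ref{item-ta}) appends the same new register $\rho(\xi^\ast(\sigma_0))$ before and after shifting, and (ii) the oracle queries $\xi_\omega(t)$ and $\xi_\omega(\Psi(t))$ coincide, so the fixed oracles return the same values and $\Psi(t) \orhd_i \Psi(t')$. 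This closes the case analysis and completes the induction.
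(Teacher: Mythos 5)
Your proposal is correct and follows essentially the same route as the paper's proof: induction on the run, with the level condition forcing $|\tau|\geq 1$ so that rules (\ref{item-tb})--(\ref{item-tc}) never touch the $\sigma$-prefix, and the identity $\xi^\ast(\sigma::\tau_0)=\xi^\ast(\sigma')::\xi^\ast(\tau_0)$ handling rule (\ref{item-ta}) and the oracle transitions. The paper states this more tersely, but your explicit shift map $\Psi$ and the observation that $|\sigma|=|\sigma'|$ preserves query-state membership are just the same argument spelled out in more detail.
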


\begin{proof}
Easy induction on the run. For the induction step, in the case of oracle transitions, these can only depend on $\xi_\omega(\sigma::\tau,a)$ and hence $\xi^\ast(\sigma)$, and similarly for rule (\ref{item-ta}). For rules (\ref{item-tb}) and (\ref{item-tc}), since the level of the transition is $\geq |\sigma|$ we must have $|\tau|\geq 1$ and hence the transitions are independent of $\sigma$.
\end{proof}

\begin{lemma}
\label{lem-dc2}
Fixing some oracles, suppose that the state $\stac{\sigma::r,a}{n,y}$ for $\stac{r}{y}\in E$ is reachable. Then $P(\xi^\ast(\sigma),\xi(r),y)$.
\end{lemma}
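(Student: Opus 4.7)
The plan is to prove the following strengthened invariant by induction on the length of the reachability run in $\A_\omega$: for every reachable state $\stac{\sigma::r,a}{o_1,o_2}$, there exist an oracle $f:X\to Y$ and a run $\rho(\xi^\ast(\sigma))\rhd^\ast_f\stac{r}{o_2}$ in the underlying algorithm $\A$. The lemma follows by specialising to $o_2=y$ with $\stac{r}{y}\in E$ and invoking the assumption that $\A$ satisfies $P$, which yields $P(\xi^\ast(\sigma),\xi(r),y)$. The base case is immediate: the sole reachable state at length zero is $\rho_\omega=\stac{[\rho([])],[]}{\Box,\Box}$, witnessed by the length-zero run of $\A$ from $\rho([])$ to itself.

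For the induction step I case-split on the type of the final $\A_\omega$-transition. Rules (\ref{item-tci})--(\ref{item-tciii}) each mirror a single $\rhd$-step in $\A$, so we simply append that step to the run supplied by the IH, keeping the oracle unchanged. The $\orhd_1$ transition alters neither the top-of-stack $r$ nor the $o_2$-component, so the IH transfers verbatim. Rule (\ref{item-ta}) pushes a fresh initial state $\rho(\xi^\ast(\sigma::r))$ onto the stack, and the corresponding length-zero run of $\A$ suffices for the new invariant. The delicate cases are the $\orhd_2$ transition, where a new answer $y$ must become the pending query's answer, and rule (\ref{item-tb}) (the pop), where the popped $y$ becomes the answer to a query pending at the next-lower stack level: in each, we close out a pending oracle call in the simulated $\A$-run by fixing $f(\xi(r'))=y$ for the relevant query state $r'$ and appending a single $\orhd_f$ step.

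The main obstacle is ensuring that the oracle $f$ assembled along the induction is self-consistent: fixing $f(\xi(r'))=y$ at some step must not contradict any value of $f$ already determined by earlier queries in the IH's $\A$-run. The crucial ingredient here is Lemma \ref{lem-dc1}: because the computation of $\A_\omega$ above a given stack prefix depends only on $\xi^\ast$ of that prefix, one can define, uniformly for each $\alpha\in X^\ast$ and each $x\in X$, a canonical answer $f_\alpha(x)\in Y$ by starting $\A_\omega$ from any state $\stac{\sigma::r_x,[]}{\Box,\Box}$ with $\xi^\ast(\sigma)=\alpha$ and $\xi(r_x)=x$, and returning either the $\orhd_2$-answer (if the $\orhd_1$-answer keeps us at the current stack level) or the eventually popped $y$-value (if $\orhd_1$ triggers rule (\ref{item-ta})); by Lemma \ref{lem-dc1} this value is independent of the particular $\sigma$ and $r_x$ chosen. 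All sub-computations invoked here terminate, since they occur as sub-runs of the run witnessing reachability of the state in the lemma statement. Taking $f:=f_{\xi^\ast(\sigma)}$ throughout makes the consistency issue vanish, and the invariant is then verified by routine case analysis on the five transition types above.
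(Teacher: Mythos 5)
Your proposal is correct and follows essentially the same route as the paper: the core of both arguments is the induced level-$\sigma$ oracle (your canonical $f_{\xi^\ast(\sigma)}$ is exactly the paper's $f_\sigma$, with well-definedness resting on Lemma \ref{lem-dc1}), together with a level-wise simulation matching $\A_\omega$-transitions of type (\ref{item-tc}) against $\rhd$-steps of $\A$ and closing pending queries at pops and $\orhd_2$-answers. The only point to tighten is that your invariant should be maintained for \emph{every} prefix of the stack rather than just the topmost register, since the pop case (\ref{item-tb}) needs the partially completed $\A$-run at the next-lower level to be available when the popped value is used to answer its pending query.
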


\begin{proof}
We fix some reachable $\stac{\sigma::r,a}{n,y}$ with $\stac{r}{y}\in E$, and first prove a simple yet bureaucratic auxiliary claim, namely that there is a partial function $f_\sigma:X\to Y$ such that if $\stac{\sigma::r',[]}{\Box,\Box}$ for $\stac{r'}{\Box}\in Q$ is some intermediate state on the run to $\stac{\sigma::r,a}{n,y}$, then $\stac{\sigma::r',[]}{\Box,\Box}\rhd_\omega^\ast \stac{\sigma::r',a'}{n',f_\sigma(\xi(r))}$ for some $a'$ and $n'$. 

To prove the claim, suppose that $x\in X$ is such that $\stac{\sigma::r',[]}{\Box,\Box}\in Q_1$ for some $r'$ with $\xi(r')=x$. Then  we must have $\stac{\sigma::r',[]}{\Box,\Box}\rhd^\ast_\omega \stac{\sigma::r',a'}{n',y'}$ for some $a',n',y'$, using transitions only of level $\geq |\sigma|+1$. To see this, note that either
\begin{equation*}
\stac{\sigma::r',[]}{\Box,\Box}\orhd_1\stac{\sigma::r',[]}{n',\Box}\orhd_2 \stac{\sigma::r',[]}{n',y'}
\end{equation*}
for $n'<|\sigma|+1$, or
\begin{equation*}
\stac{\sigma::r',[]}{\Box,\Box}\orhd_1\stac{\sigma::r',[]}{n',\Box}\rhd_\omega \stac{\sigma::r'::\rho(\xi^\ast(\sigma::r')),[]}{\Box,\Box}
\end{equation*}
and so at so at some point an instance of rule (\ref{item-tb}) must occur, and the first instance would result in a state of the form $\stac{\sigma::r',a'}{n',y'}$ having used only transitions of level $\geq |\sigma|+1$. 

Thus, we can set $f_\sigma(x):=y'$ for this $y'$, which is independent of our choice of $r'$ satisfying $\xi(r')=x$ by Lemma \ref{lem-dc1}, and undefined if there is no reachable state $\stac{\sigma::r',[]}{\Box,\Box}\in Q_1$ with $\xi(r')=x$. This proves the claim.

Now we utilise this function $f_\sigma$ to show that for any $u\in X^\ast$, we have
\begin{equation*}
\stac{\rho(u)}{\Box}\rhd^\ast \stac{r'}{y'}\mbox{ resp.} \stac{r}{\Box}
\end{equation*}
on oracle $f_\sigma$, if and only if
\begin{equation*}
\stac{\sigma::\rho(u),[]}{\Box,\Box}\rhd^\ast_\omega \stac{\sigma::r',a'}{n',y'}\mbox{ resp.} \stac{\sigma::r',[]}{\Box,\Box}.
\end{equation*}
for some $a',n'$. But this is a simple induction on runs which follows by inspecting rule type (\ref{item-tc}). The only interesting case is an oracle transition $\stac{r'}{\Box}\orhd \stac{r'}{y'}$ for $\stac{r'}{\Box}\in Q$, but this is exactly what is covered by the claim.

We can now complete the proof: If $\stac{\sigma::r,a}{n,y}$ is reachable we must, more specifically, have
\begin{equation*}
\rho_\omega(1)\rhd^\ast_\omega \stac{\sigma::\rho(\xi^\ast(\sigma)),[]}{\Box,\Box}\rhd_\omega^\ast \stac{\sigma::r,a}{n,y}
\end{equation*}
and thus $\stac{\rho(\xi^\ast(\sigma))}{\Box}\rhd^\ast \stac{r}{y}\in E$ in $\A$ on $f_\sigma$. But since we assume that $\A$ satisfies $P$ we have $P(\xi^\ast(\sigma),\xi(r),y)$.
\end{proof}

\begin{lemma}
\label{lem-dc3}
Suppose that the state $\stac{\sigma,a}{n,y}$ is reachable. Then we have
\begin{equation*}
|\sigma|\leq n\to P(\initSeg{\alpha}{n},\alpha_n,y)\mbox{ \ \ for \ \ }\alpha:=\xi_\omega(\sigma::a).
\end{equation*}

\end{lemma}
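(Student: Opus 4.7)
The plan is to proceed by induction on the length of the run reaching $\stac{\sigma,a}{n,y}$. The base case is vacuous: the initial state $\rho_\omega=\stac{[\rho([])],[]}{\Box,\Box}$ has $o_1=\Box\notin\NN$ and is therefore not of the required form, so the hypothesis $|\sigma|\leq n$ cannot hold.

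For the inductive step, I would enumerate the transitions in Definition \ref{def-baralg} whose destination can have both $o_1=n\in\NN$ and $o_2=y\in Y$ nonempty. A quick syntactic check shows that only three rules qualify: the oracle transition $\orhd_2$, rule (\ref{item-tb}), and rule (\ref{item-tciii}); every other rule yields $o_1=\Box$ in its destination. The $\orhd_2$ case is immediate, since its source lies in $Q_2$ and therefore satisfies $n<|\sigma|$, directly contradicting the hypothesis. For rule (\ref{item-tciii}), the source $\stac{\tau::r,a}{n,y}$ and destination $\stac{\tau::r',a}{n,y}$ have the same register length, so the hypothesis lifts to the source; moreover, condition (\ref{item-aaiii}) of Definition \ref{def-aa} gives $\xi(r)=\xi(r')$, hence $\xi^\ast(\tau::r)=\xi^\ast(\tau::r')$, so $\xi_\omega$ and therefore $\alpha$ are preserved and the induction hypothesis applies directly.

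The substantive case is rule (\ref{item-tb}), with source $\stac{\sigma::r,a'}{n,y}$ where $\stac{r}{y}\in E$ and destination $\stac{\sigma,\xi(r)::a'}{n,y}$. First I would verify that $\alpha$ is preserved across this transition by unfolding
\begin{equation*}
\xi_\omega(\sigma::r,a') \;=\; \xi^\ast(\sigma)::\xi(r)::a'::\zero \;=\; \xi_\omega(\sigma,\xi(r)::a').
\end{equation*}
I would then split on whether $|\sigma|+1\leq n$, in which case the induction hypothesis applied to the source yields the conclusion directly, or $|\sigma|=n$, which is the only remaining possibility given $|\sigma|\leq n$. In the second sub-case we have $\initSeg{\alpha}{n}=\xi^\ast(\sigma)$ and $\alpha_n=\xi(r)$, and Lemma \ref{lem-dc2} applied to the reachable source state delivers precisely $P(\xi^\ast(\sigma),\xi(r),y)$.

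The main subtlety I anticipate lies in the interface between the inductive step and Lemma \ref{lem-dc2} within rule (\ref{item-tb}): the induction progressively shrinks the register stack while keeping $n$ fixed, so Lemma \ref{lem-dc2} must take over at exactly the moment $n=|\sigma|$, when rule (\ref{item-tb}) peels off the register whose query index matches the counter. Aside from this, the proof is essentially bookkeeping — confirming that no other transition can create a state of the required form, and expanding the definitions of $\xi^\ast$ and $\xi_\omega$ in the key identity above.
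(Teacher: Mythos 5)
Your proposal is correct and follows essentially the same route as the paper's own proof: induction on the run length, case analysis on the last transition (which can only be $\orhd_2$, rule (\ref{item-tb}), or rule (\ref{item-tciii})), with the $\orhd_2$ case vacuous since $n<|\sigma|$, the (\ref{item-tciii}) case handled by the induction hypothesis via $\xi(r)=\xi(r')$, and the (\ref{item-tb}) case split into $|\sigma|+1\leq n$ (induction hypothesis) versus $|\sigma|=n$ (Lemma \ref{lem-dc2}). The identity $\xi_\omega(\sigma::r,a')=\xi_\omega(\sigma,\xi(r)::a')$ you verify is exactly the step the paper uses to see that $\alpha$ is preserved across rule (\ref{item-tb}).
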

\begin{proof}
Induction on the length of the run, fixing some oracles. The base case is trivial, so let's assume that $\rho_\omega(1),\ldots,s,\stac{\sigma,a}{n,y}$ is a run in $\A$. Then the last transition can only be an oracle transition on the second oracle, or an instance of rule (\ref{item-tb}) or (\ref{item-tciii}). 

In the former case, we would have $s=\stac{\sigma,[]}{n,\Box}\orhd_2 \stac{\sigma,[]}{n,y}$ with $n<|\sigma|$, and the result trivially holds. For rule (\ref{item-tciii}), we would have $s=\stac{\sigma'::r,a}{n,y}\orhd_2 \stac{\sigma'::r',a}{n,y}$ with $\stac{r}{y}\rhd\stac{r'}{y}$ and hence $\xi(r)=\xi(r')$, so the result holds by induction. Finally, for rule (\ref{item-tb}), we have $s=\stac{\sigma::r,a}{n,y}$ with $\stac{r}{y}\in E$ and $|\sigma|+1\leq n\to P(\initSeg{\alpha}{n},\alpha_n,y)$ for $\alpha:=\xi_\omega(\sigma::r::a)=\xi_\omega(\sigma::\xi(r)::a)$. But by Lemma \ref{lem-dc2}  we must also have $P(\xi^\ast(\sigma),\xi(r),y)$, and hence $|\sigma|=n\to P(\initSeg{\alpha}{n},\alpha_n,y)$, and we're done.
\end{proof}

\begin{theorem}
\label{thm-dc}
Suppose that $\A$ satisfies $P$. Then $\A_\omega$ satisfies $P_\omega$. 
\end{theorem}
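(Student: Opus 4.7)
The plan is to obtain Theorem \ref{thm-dc} as an essentially immediate consequence of Lemma \ref{lem-dc3}, which has already done the heavy lifting. First I would unfold what it means for $\A_\omega$ to satisfy $P_\omega$ according to Definition \ref{def-pred}: fixing arbitrary oracles, I need to show that whenever
\begin{equation*}
\rho_\omega \rhd_\omega^\ast \stac{r}{[n,y]} \in E_\omega,
\end{equation*}
the functional witness $\xi_\omega(r)$ paired with the oracle-returned value $y$ validates $P_\omega$, i.e.\ $P_\omega(\xi_\omega(r),[n,y])$ holds.

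Next I would exploit the restrictive shape of end states. By the definition of $E_\omega$, every end state has the form $\stac{[],a}{n,y}$ for some $a \in X^\ast$ and $n \in \NN$, $y \in Y$, so automatically the $\sigma$-component of the state is the empty sequence. Consequently $\xi_\omega([],a) = \xi^\ast([])::a::\zero = a::\zero$, which is the $\alpha$ that $P_\omega$ refers to.

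Now I would invoke Lemma \ref{lem-dc3} with $\sigma := []$, $a := a$. The reachability hypothesis of the lemma is supplied by our assumption, and the side condition $|\sigma| \le n$ becomes $0 \le n$, which is trivially true. The conclusion is then $P(\initSeg{\alpha}{n},\alpha_n,y)$ for $\alpha = a::\zero = \xi_\omega([],a)$, and this is exactly the statement $P_\omega(\alpha,[n,y])$ by the definition of $P_\omega$. No further induction or case analysis is needed.

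The only step that might require care is checking that the conversion between what the approximation algorithm $\A_\omega$ outputs (namely $\xi_\omega$ applied at the end state) and the $\alpha$ appearing in Lemma \ref{lem-dc3} really match up, but this is just unfolding notation. The substantive obstacles, in particular the construction of the simulated oracle $f_\sigma$ in Lemma \ref{lem-dc2} and the level-based induction in Lemma \ref{lem-dc3}, have already been overcome upstream, so at the level of the theorem itself there is essentially nothing left to do beyond specialising $\sigma$ to the empty sequence.
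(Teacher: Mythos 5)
Your proposal is correct and follows the paper's own proof essentially verbatim: both arguments fix the oracles, observe that any end state has the form $\stac{[],a}{n,y}$, and apply Lemma \ref{lem-dc3} with $\sigma=[]$ so that the hypothesis $|\sigma|\leq n$ reduces to the trivial $0\leq n$, yielding $P(\initSeg{\alpha}{n},\alpha_n,y)$ for $\alpha=a::\zero=\xi_\omega([],a)$, which is exactly $P_\omega$.
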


\begin{proof}
Fix some oracles for $\A_\omega$ and suppose that $\rho_{\omega}(1)\rhd_\omega^\ast\stac{[],a}{n,y}\in E$. Then by Lemma \ref{lem-dc3} we have $0\leq n\to P(\initSeg{\alpha}{n},\alpha_n,y)$ for $\alpha=\xi_\omega([],a)=a::\zero$ which is just $P_\omega(\xi_\omega([],a),[n,y])$.
\end{proof}

We now turn to under what conditions our algorithm $\A_\omega$ actually terminates, assuming $\A$ that does.

\begin{lemma}
\label{lem-term1}
Suppose that $\A$ terminates. Define the predicate $B(\sigma)$ on $R^\ast$ as follows:
\begin{quote}$B(\sigma)$ holds only if $\stac{\sigma_i}{\Box}\in Q$ for all $i<|\sigma|$ and it is not the case that $\stac{\sigma,[]}{\Box,\Box}\rhd_\omega^\ast \stac{\sigma,a}{n,y}$ for some $a,n,y$.\end{quote}
Then $B(\sigma)$ implies there exists some $r\in R$ such that $B(\sigma::r)$. 
\end{lemma}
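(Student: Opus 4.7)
My plan is to take $r$ to be the first $\A$-query state encountered as $\A$'s internal computation is simulated one level above $\sigma$. First, I would analyse the unique deterministic run of $\A_\omega$ starting from $\stac{\sigma,[]}{\Box,\Box}$. By the first clause of $B(\sigma)$, this state lies in $Q_1$, so $\orhd_1$ fires, giving $\stac{\sigma,[]}{n_0,\Box}$. The second clause of $B(\sigma)$ then forces $n_0\geq|\sigma|$: otherwise the state would belong to $Q_2$ and $\orhd_2$ would immediately produce $\stac{\sigma,[]}{n_0,y_0}$, contradicting $B(\sigma)$. Consequently rule (\ref{item-ta}) applies, and the run proceeds to $\stac{\sigma::r_0,[]}{\Box,\Box}$ where $r_0:=\rho(\xi^\ast(\sigma))$.

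Next, I would argue that the computation must pass through a state of the form $\stac{\sigma::r^\ast,[]}{\Box,\Box}$ with $\stac{r^\ast}{\Box}\in Q$. Starting at $\stac{r_0}{\Box}\in I$, clause (\ref{item-aaiii}) of Definition \ref{def-aa} guarantees that $\A$'s $\rhd$-transitions never leave $S_\Box$; since $E\subseteq S\setminus S_\Box$ by clause (\ref{item-aaii}), and $\A$ terminates from all initial states, the $\A$-run must reach a query state after finitely many $\rhd$-steps. Rule (\ref{item-tci}) mirrors these steps exactly in $\A_\omega$, so we obtain $\stac{\sigma::r_0,[]}{\Box,\Box}\rhd^\ast_\omega\stac{\sigma::r^\ast,[]}{\Box,\Box}$.

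I would then set $r:=r^\ast$ and verify $B(\sigma::r^\ast)$. The first clause is immediate from $r^\ast\in Q$ and $B(\sigma)$. For the second, assume toward contradiction that $\stac{\sigma::r^\ast,[]}{\Box,\Box}\rhd^\ast_\omega\stac{\sigma::r^\ast,a}{n,y}$, and concatenate with the earlier chain to get $\stac{\sigma,[]}{\Box,\Box}\rhd^\ast_\omega\stac{\sigma::r^\ast,a}{n,y}$. The natural continuation is one further rule-(\ref{item-tb}) step yielding $\stac{\sigma,\xi(r^\ast)::a}{n,y}$, which would contradict $B(\sigma)$; this requires $\stac{r^\ast}{y}\in E$.

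The main obstacle is justifying this final applicability. I expect the cleanest route to combine Lemma \ref{lem-dc1}'s modularity with determinism of $\A_\omega$ and $\A$'s termination: the sub-run from $\stac{\sigma::r^\ast,[]}{\Box,\Box}$ corresponds, along the lines of Lemma \ref{lem-dc2}, to a genuine $\A$-computation on an oracle derived from $f_1,f_2$, and termination of $\A$ on this derived oracle must deliver an $\A$-end state that the rule-(\ref{item-tb}) reduction exposes. If the specific $y$ already present at $\stac{\sigma::r^\ast,a}{n,y}$ does not immediately witness $\stac{r^\ast}{y}\in E$, I would instead trace the continuation one more step via rule (\ref{item-tcii}) or (\ref{item-tciii}) and iterate the argument, well-foundedness of the iteration being guaranteed by the assumed termination of $\A$.
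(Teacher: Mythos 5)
There is a genuine gap, and it sits exactly where you flag "the main obstacle": your chosen witness --- the \emph{first} query state $r^\ast$ reached above $\sigma$ --- need not satisfy $B(\sigma::r^\ast)$, and the contradiction you try to derive from $\neg B(\sigma::r^\ast)$ does not go through. If $\stac{\sigma::r^\ast,[]}{\Box,\Box}\rhd_\omega^\ast\stac{\sigma::r^\ast,a}{n,y}$ but $\stac{r^\ast}{y}\notin E$, then rule (\ref{item-tcii}) or (\ref{item-tciii}) fires and the run moves on to some $\stac{\sigma::r'',\cdot}{\cdot}$; if the computation then gets permanently stuck one level deeper (i.e.\ $B(\sigma::r'')$ holds for a \emph{later} query state $r''$), the run never returns to level $|\sigma|$, so $B(\sigma)$ is not contradicted and $r^\ast$ was simply the wrong witness --- the correct one is $r''$. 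Your proposed fix, "iterate the argument," is therefore not a verification of your chosen $r$ but a search for a different one, and its well-foundedness is not simply "termination of $\A$": the iteration steps through query states whose answers are supplied by sub-runs of $\A_\omega$, so you first need to package those sub-runs into a single oracle $f_\sigma$ (the construction of Lemma \ref{lem-dc2}, which requires the independence property of Lemma \ref{lem-dc1}) before termination of $\A$ \emph{on that oracle} can be invoked.

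The paper avoids this entirely by arguing contrapositively: assume $\neg B(\sigma::r)$ for \emph{every} $r$ with $\stac{r}{\Box}\in Q$. Then every query state at level $|\sigma|+1$ receives an answer $y$ depending only on $\xi^\ast(\sigma::r)$, which defines an oracle $f_\sigma$ for $\A$; termination of $\A$ on $f_\sigma$ from $\rho(\xi^\ast(\sigma))$ yields an end state $\stac{r}{y}\in E$, the corresponding run lifts to $\A_\omega$, and one application of rule (\ref{item-tb}) produces $\stac{\sigma,\xi(r)::a}{n,y}$, contradicting $B(\sigma)$. Your opening analysis (that $n_0\geq|\sigma|$, that rule (\ref{item-ta}) fires, and that a query state of $\A$ must be reached) is correct and matches the paper, but to complete the proof you should replace the direct choice of $r^\ast$ with this global contradiction argument, or else carry out the oracle construction explicitly and split on whether the induced $\A$-run reaches an end state (contradicting $B(\sigma)$) or stalls at a query state whose sub-run never returns (which is then the witness).
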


\begin{proof}
This is a simple adaptation of the argument already seen in Lemma \ref{lem-dc2}. If $B(\sigma)$ holds then we must have
\begin{equation*}
\stac{\sigma,[]}{\Box,\Box}\orhd_1\stac{\sigma,[]}{n,\Box}\orhd_2 \stac{\sigma::\rho(\xi^\ast(\sigma)),[]}{\Box,\Box}
\end{equation*}
with $n\geq |\sigma|$. Now, if for all $r$ with $\stac{r}{\Box}\in Q$ we have $\neg B(\sigma::r)$, this would imply that $\stac{\sigma::r,[]}{\Box,\Box}\rhd_\omega^\ast\stac{\sigma::r,a}{n,y}$ for some $n,a,y$ dependent only on $\xi^\ast(\sigma::r)$. Therefore there is an induced run $\stac{\rho(\xi^\ast(\sigma))}{\Box}\rhd^\ast \stac{r}{y}\in E$ in $\A$, which gives rise to a corresponding run $\stac{\sigma::\rho(\xi^\ast(\sigma)),[]}{\Box,\Box}\rhd_\omega^\ast\stac{\sigma::r,a}{n,y}$ in $\A_\omega$. But since $\stac{\sigma::r,a}{n,y}\rhd_\omega \stac{\sigma,\xi(r)::a}{n,y}$, this would contradict $B(\sigma)$.
\end{proof}

\begin{lemma}
\label{lem-term2}
Suppose that $\A$ terminates but that $\A_\omega$ does not terminate on oracles $\phi_1:X^\NN\to\NN$ and $\phi_2:X^\NN\to Y$. Then there exists some $\gamma:\NN\to X$ such that
\begin{equation*}
(\forall N\in\NN)(N\leq \phi_1(\initSeg{\gamma}{N}::\zero))
\end{equation*}
\end{lemma}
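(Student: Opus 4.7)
The strategy will be to construct iteratively a sequence of query registers $r_0, r_1, \ldots \in R$ (each with $\stac{r_i}{\Box} \in Q$) such that $B([r_0, \ldots, r_{N-1}])$ holds for every $N$, and then set $\gamma_n := \xi(r_n)$. Once this is in place, the desired inequality will fall out of the $Q_1$-oracle-query analysis that is at the heart of the proof of Lemma \ref{lem-term1}. To launch the induction I first observe that $B([])$ holds vacuously: no transition rule---oracle or otherwise---has a source of the form $\stac{[],[]}{\Box,\Box}$, since every rule requires either a nonempty register stack, or a marker placing the state in $Q_1$ or $Q_2$, which likewise demands nonempty structure. Hence $\stac{[],[]}{\Box,\Box}$ cannot reach any state $\stac{[],a}{n,y}$, so $B([])$ holds. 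Iterating Lemma \ref{lem-term1} (with dependent choice at the metalevel) then extends $[]$ one register at a time and yields the infinite sequence.

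Having set $\gamma_n := \xi(r_n)$, and writing $\sigma_N := [r_0, \ldots, r_{N-1}]$, the next step is to verify $N \leq \phi_1(\initSeg{\gamma}{N} :: \zero)$. The case $N=0$ is trivial. For $N \geq 1$, $B(\sigma_N)$ holds and $\sigma_N$ consists entirely of query registers, so $\stac{\sigma_N, []}{\Box, \Box}$ lies in $Q_1$; the oracle transition $\orhd_1$ then yields $\stac{\sigma_N, []}{n, \Box}$ with $n = \phi_1(\xi_\omega(\sigma_N, [])) = \phi_1(\initSeg{\gamma}{N}::\zero)$. If $n < N = |\sigma_N|$, then this state lies in $Q_2$, so $\orhd_2$ would fire and produce $\stac{\sigma_N, []}{n, y}$, which is of the forbidden form $\stac{\sigma_N, a}{n, y}$ and contradicts $B(\sigma_N)$. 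Hence $n \geq N$, as required.

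The main obstacle I expect is the very first application of Lemma \ref{lem-term1}, namely passing from $B([])$ to $B([r_0])$: the proof of Lemma \ref{lem-term1} tacitly uses $\stac{\sigma,[]}{\Box,\Box} \in Q_1$ in order to fire the initial $\orhd_1$, and this is not available when $\sigma = []$. To patch this step, I plan to replay the argument of Lemma \ref{lem-term1} with the non-termination hypothesis on $\A_\omega$ playing the role that $B(\sigma)$ plays in the original. Concretely, if no query register $r$ satisfied $B([r])$, then each such $r$ would yield a finite run $\stac{[r],[]}{\Box,\Box} \rhd_\omega^\ast \stac{[r], a}{n, y}$, and collecting the $y$'s into an induced oracle for $\A$ would, by termination of $\A$, produce a finite run $\stac{\rho([])}{\Box} \rhd^\ast \stac{r}{y} \in E$ in $\A$; lifting this to $\A_\omega$ via rules (\ref{item-tci})--(\ref{item-tciii}) and then popping with rule (\ref{item-tb}) gives a terminating $\A_\omega$-run from $\rho_\omega$ ending in the end state $\stac{[], \xi(r)::a}{n, y}$, contradicting the hypothesis that $\A_\omega$ does not terminate at $\rho_\omega$.
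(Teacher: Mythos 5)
Your argument is correct and follows essentially the same route as the paper: build the infinite sequence of query registers by iterating Lemma \ref{lem-term1} with dependent choice on the metalevel, and obtain $N\leq \phi_1(\initSeg{\gamma}{N}::\zero)$ from the observation that a smaller first-oracle answer would place the resulting state in $Q_2$ and hence contradict $B(\sigma_N)$. Your explicit patch for the base case is precisely the content of the paper's unproved opening assertion that non-termination yields some $\stac{r}{\Box}\in Q$ with $B([r])$, so you have correctly identified and supplied the one step the paper leaves implicit.
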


\begin{proof}
If $\A_\omega$ does not terminate then there is some $\stac{r}{\Box}\in Q$ such that $B([r])$. Applying Lemma \ref{lem-term1} together with dependent choice on the meta-level, we obtain a sequence $F:\NN\to R$ such that for all $N\in \NN$, $\stac{F_N}{\Box}\in Q$ but it is not the case that $\stac{\initSeg{F}{N},[]}{\Box,\Box}\rhd_\omega^\ast \stac{\initSeg{F}{N},a}{n,y}$. 

In particular, we must have $N\leq \phi_1(\xi_\omega(\initSeg{F}{N},[]))$, else $\stac{\initSeg{F}{N},[]}{\Box,\Box}\orhd_1 \stac{\initSeg{F}{N},[]}{n,\Box}\in Q_2$. Define $\gamma_n:=\xi(F_n)\in X$. Then $\xi_\omega(\initSeg{F}{N},[])=\initSeg{\gamma}{N}::\zero$, and the result follows.
\end{proof}
Our main theorem on termination now follows directly from Lemma \ref{lem-term2}:
\begin{theorem}
\label{thm-term}
Suppose that $\A$ terminates. Then $\A_\omega$ terminates on $\F_1\subset X^\NN\to\NN$ and the full function space $X^\NN\to Y$, where
\begin{equation*}
\F_1:=\{\phi:X^\NN\to\NN\; | \; (\forall \gamma)(\exists N)(N<\phi_1(\initSeg{\gamma}{N}::\zero))\}
\end{equation*}
\end{theorem}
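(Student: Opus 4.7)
The plan is to obtain the theorem as the immediate contrapositive of Lemma \ref{lem-term2}. I would fix $\phi_1 \in \F_1$ and an arbitrary $\phi_2 : X^\NN \to Y$, and suppose for contradiction that $\A_\omega$ fails to terminate on these oracles starting from $\rho_\omega(1)$. Since $\A$ is assumed to terminate, Lemma \ref{lem-term2} applies directly and supplies a sequence $\gamma : \NN \to X$ such that $N \leq \phi_1(\initSeg{\gamma}{N}::\zero)$ holds for every $N \in \NN$. Instantiating the defining condition of $\F_1$ at this particular $\gamma$ yields an $N$ at which $\phi_1(\initSeg{\gamma}{N}::\zero)$ and $N$ must stand in the opposite relation, and this contradicts the conclusion of the lemma. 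Hence $\A_\omega$ must terminate on $(\phi_1,\phi_2)$, as required.

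A secondary observation is that the conclusion of Lemma \ref{lem-term2} is phrased purely in terms of $\phi_1$: the oracle $\phi_2$ enters only to the extent that the hypothetical non-terminating run is permitted to consult it via $\orhd_2$, while the witnessing $\gamma$ is extracted from the register sequence produced by Lemma \ref{lem-term1}. This is exactly why the theorem imposes no condition at all on $\phi_2$ beyond totality, and so termination holds on the full function space $X^\NN \to Y$ on the second oracle. The substantive work --- constructing $\gamma$ by iterating the extension property of the auxiliary predicate $B(\sigma)$ and applying a single meta-level instance of dependent choice --- has already been discharged in Lemmas \ref{lem-term1} and \ref{lem-term2}. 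From the standpoint of the present theorem, therefore, no real obstacle remains; the step is purely logical, amounting to contrapositive reasoning against the defining clause of $\F_1$, and I would expect the write-up to be a couple of lines at most.
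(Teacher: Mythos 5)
Your proposal is correct and matches the paper's own argument, which likewise obtains the theorem as an immediate contrapositive of Lemma \ref{lem-term2} (the paper's entire proof is the remark that the theorem ``follows directly'' from that lemma), including your observation that $\phi_2$ is unconstrained because the lemma's conclusion mentions only $\phi_1$. One small caveat: as printed, the defining clause of $\F_1$ reads $N<\phi_1(\initSeg{\gamma}{N}::\zero)$, which does not actually negate the lemma's conclusion $N\leq\phi_1(\initSeg{\gamma}{N}::\zero)$; the intended condition --- confirmed by the continuity remark following the theorem, which establishes $\phi_1(\initSeg{\gamma}{N}::\zero)<N$ --- has the inequality reversed, and your phrase ``the opposite relation'' silently assumes this corrected reading rather than the one displayed.
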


While $\F_1$ may not seem at first glace to be a particularly natural restriction, note that any \emph{continuous} $\phi_1$ is automatically a member of $\F_1$. To see this: if some continuous $\phi_1$ takes an argument $\gamma$, then $\phi_1(\gamma)$ depends only on some finite initial segment $\initSeg{\gamma}{n}$ of $\gamma$. Taking $N:=\max(\phi_1(\gamma)+1,n)$, we have
\begin{equation*}
\phi_1(\initSeg{\gamma}{N}::\zero)=\phi_1(\gamma)<\phi_1(\gamma)+1=N.
\end{equation*}

%%%%%%%%%%%%%%%%%%%%%%%%%%%%%%%%%%%%%%%%%%%%%%%%%
\subsection{Understanding how choice sequences are built}
%%%%%%%%%%%%%%%%%%%%%%%%%%%%%%%%%%%%%%%%%%%%%%%%%
\label{sec-dc-understand}

While our main construction of $\A_\omega$ may seem quite involved, the idea is that given an intelligent and meaningful input algorithm $\A$, we are able to clearly see how approximations to choice sequences are build by $\A_\omega$. In particular, the concepts introduced in Section \ref{sec-thmalg-graphs} can both be lifted to this setting.

\begin{theorem}
\label{thm-graph}
Suppose that $G=(\Omega,A\cup A')$ is a control flow graph for $\A$ w.r.t. $\pi:S\to \Omega$, and let $\Omega_I:=\pi(\rho(X^\ast))$ and $\Omega_E:=\pi(E)$. We define
\begin{equation*}
\Sigma:=\{\star\}+((\bt{\NN}+\tp{\NN})\times\Omega)
\end{equation*}
where $\bt{\NN}$ and $\tp{\NN}$ are two copies of $\NN$, and the directed graph $G_\omega:=(\Sigma,B\cup B')$ by
\begin{enumerate}

\item\label{item-ga} if $pAq$ then $(\bt{n},p)B(\bt{n},q)$;

\item\label{item-gb} if $pA'q$ then:
\begin{itemize}

\item $(\bt{n},p)B'(\tp{n},p)$;

\item $(\tp{n},p)B'(\bt{n},q)$;

\item $(\tp{n},p)B(\bt{n+1},u)$ for all $u\in \Omega_I$

\item $(\bt{n+1},u)B(\bt{n},q)$ for $u\in \Omega_E$;

\end{itemize}

\item $(\bt{0},u)B\star$ for $u\in \Omega_E$.

\end{enumerate}
Finally, let $\pi_\omega$ be given by
\begin{equation*}
\begin{aligned}
\pi_\omega\stac{\sigma::r,a}{o_1,o_2}&:=\begin{cases}(\tp{|\sigma|},\pi\stac{r}{o_2}) & \mbox{if $o_1\in\NN,o_2=\Box$}\\
(\bt{|\sigma|},\pi\stac{r}{o_2}) & \mbox{otherwise}\end{cases}\\ 
\pi_\omega\stac{[],a}{o_1,o_2}&=\star
\end{aligned}
\end{equation*}
Then $G_\omega$ is a control flow graph for $\A_\omega$ w.r.t $\pi_\omega$.
\end{theorem}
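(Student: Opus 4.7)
I would proceed by induction on the length of a run in $\A_\omega$, performing a case analysis at each step on which transition rule was used---one of (a), (b), (c)(i)--(iii), or an oracle transition $\orhd_1, \orhd_2$---and verifying that the induced pair $(\pi_\omega(s_i), \pi_\omega(s_{i+1}))$ lies in $G_\omega$, with label $B$ when $s_i \notin Q_1\cup Q_2$ and label $B'$ when $s_i\in Q_1\cup Q_2$. A small bookkeeping step checks that $\pi_\omega$ is well-defined on all reachable states and classifies each as $\bt$-type, $\tp$-type, or $\star$.

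The ``easy'' cases are the internal transitions (c)(i)--(iii) and rule (b): both endpoints sit at the same stack height $|\sigma|$, and $\pi_\omega$ sends each to a $\bt{|\sigma|}$-level vertex (since none of these configurations is of $\tp$-type, i.e.\ $o_1\in\NN,\ o_2=\Box$). The underlying $\A$-transition $\stac{r}{o}\rhd\stac{r'}{o'}$ is always non-oracle, so the CFG of $\A$ produces an $A$-edge $\pi\stac{r}{o}\,A\,\pi\stac{r'}{o'}$, and rule (1) of $G_\omega$ lifts this to the required $B$-edge. When rule (b) pops to an empty stack ($\sigma=[]$), the target is $\star$ and rule (3) applies, using $\pi\stac{r}{y}\in\Omega_E$.

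The ``hard'' cases are rule (a), the oracle transitions $\orhd_1,\orhd_2$, and rule (b) when $\sigma\neq[]$: these involve a change of level between $\bt{n}$, $\tp{n}$ and $\bt{n+1}$, so fall under rule (2) of $G_\omega$---which is conditional on an $A'$-edge existing in the CFG of $\A$. The key ingredient will be an auxiliary observation in the spirit of the claim inside the proof of Lemma \ref{lem-dc2}: whenever a state with register $r$ on top of a stack $\sigma$ is reachable in $\A_\omega$, then $\stac{r}{\Box}$ (resp.\ $\stac{r}{o}$) is reachable in $\A$ from $\rho(\xi^\ast(\sigma))$ on some oracle. Applied to the second-from-top register $r''$ at the relevant moment, this produces a reachable query state $\stac{r''}{\Box}\in Q$ in $\A$, and by choosing an oracle for $\A$ so that $f(\xi(r''))=y$ for whichever $y$ is pertinent, one extracts an $A'$-edge $\pi\stac{r''}{\Box}\,A'\,\pi\stac{r''}{y}$ from $\A$'s CFG.

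Once that is in hand, the remainder is routine matching: $\orhd_1$ corresponds to the first bullet of rule (2) with $p=\pi\stac{r}{\Box}$; rule (a) corresponds to the third bullet with $u=\pi\stac{\rho(\xi^\ast(\sigma))}{\Box}\in\Omega_I$; $\orhd_2$ corresponds to the second bullet; and rule (b) with $\sigma=\sigma'::r''$ corresponds to the fourth bullet with $u=\pi\stac{r}{y}\in\Omega_E$ and $q=\pi\stac{r''}{y}$. The main obstacle I expect is tracking the identity of this second-from-top register $r''$ across an inner subcomputation---i.e.\ confirming that the $r''$ reappearing at the top after a rule-(b) pop is syntactically identical to the one that was sitting below $r$ throughout the inner run, and was originally pushed up into second-from-top position as a query state at the matching rule-(a) step. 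This rests on the invariant that rules (a), (b) and the oracle transitions act only on the top of the stack, which can be extracted by an induction on runs in the style of Lemma \ref{lem-dc1}.
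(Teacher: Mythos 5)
Your proposal is correct and follows essentially the same route as the paper: a direct case analysis on the six transition types of $\A_\omega$ (rules (a), (b), (c)(i)--(iii) and the two oracle transitions), matching each to the corresponding $B$- or $B'$-edge of $G_\omega$ according to whether the source state is a query state. The one point where you go beyond the paper is in explicitly justifying that the required $A$/$A'$-edges exist in $G$ --- via the correspondence between reachable $\A_\omega$-states with stack $\sigma::r$ and runs of $\A$ from $\rho(\xi^\ast(\sigma))$ on the oracle $f_\sigma$, in the spirit of Lemma \ref{lem-dc2} --- a step the paper's proof takes for granted, so your extra care is a refinement rather than a departure.
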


\begin{proof}
Suppose that $\rho_\omega(1),\ldots,s,t$ is a run in $\A_\omega$. If $s\rhd_\omega t$ by rule type (\ref{item-tc}) then $\pi_\omega(s)=(\bt{n},p)$ and $\pi_\omega(t)=(\bt{n},q)$ for $pAq$ and some $\bt{n}\in\bt{\NN}$, and thus $(\bt{n},p)B(\bt{n},q)$.

If $s\rhd_\omega t$ by rule type (\ref{item-tb}), there are two cases. Either $\pi_\omega(s)=(\bt{n+1},u)$ for $u\in \Omega_E$, in which case $\pi_\omega(t)=(\bt{n},q)$ where $pA'q$ for some $p$, and thus $(\bt{n+1},u)B(\bt{n},q)$. Or $\pi_\omega(s)=(\bt{0},u)$ for $u\in\Omega_E$, in which case $\pi(t)=\star$ and so $(\bt{0},u)B\star$.

If $s\orhd_1 t$, then $\pi_\omega(s)=(\bt{n},p)$ where $pA'q$ for some $q$ and $\pi_\omega(t)=(\tp{n},p)$, and we have $(\bt{n},p)B'(\tp{n},p)$.

If $s\rhd_\omega t$ by rule type (\ref{item-ta}), then $s$ must be the result of an oracle query from some $s'\in Q_1$ and thus $\pi_\omega(s)=(\tp{n},p)$ where $pA'q$ for some $q$. But $\pi_\omega(t)=(\bt{n+1},u)$ for some $u\in\Omega_I$, and $(\tp{n},p)B(\bt{n+1},u)$.

Finally, if $s\orhd_2 t$ then $s$ must be the result of an oracle query from some $s'\in Q_1$ and thus $\pi_\omega(s)=(\tp{n},p)$. But $\pi_\omega(t)=(\bt{n},q)$ where $pA'q$ and thus $(\tp{n},p)B'(\bt{n},q)$.
\end{proof}

We can also characterise how mind changes work for choice sequences.

\begin{theorem}
\label{thm-desc}
Suppose that $\A$ is descending w.r.t. $(W,\succ)$. Define the relation $\succ_\omega$ on $X^\ast$ by
\begin{itemize}

\item $\seq{x_0,\ldots,x_{k-1}}\succ_\omega \seq{x_0,\ldots,x_k,x}$ for $x\in X$;

\item $\seq{x_0,\ldots,x_{i-1},x,\ldots,x_{k-1}}\succ_\omega \seq{x_0,\ldots,x_{i-1},y}$ for $x\succ y$.

\end{itemize}
Extend this to a relation on $X^\NN$ by setting $\alpha\succ_\omega\beta$ iff $\alpha=a::\zero$ and $\beta=b::\zero$ with $a\succ_\omega b$. Then $\A_\omega$ is descending w.r.t. $\succ_\omega$.
\end{theorem}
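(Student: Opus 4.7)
The plan is to verify, by case analysis on the transition rules of $\A_\omega$, that $\xi_\omega(s)\succeq_\omega \xi_\omega(t)$ whenever $s\rhd_\omega t$. There are seven transition types to check: the two oracle transitions $\orhd_1$ and $\orhd_2$, rule (\ref{item-ta}), rule (\ref{item-tb}), and the three sub-cases (\ref{item-tci})--(\ref{item-tciii}) of rule (\ref{item-tc}). In each case, $\xi_\omega$ is given by $\xi_\omega(\sigma,a)=\xi^\ast(\sigma)::a::\zero$, so the analysis amounts to tracking how the finite support of this infinite sequence is altered.

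Four families of transitions should leave $\xi_\omega$ unchanged outright, giving equality. The two oracle transitions only rewrite the answer tape, leaving the register intact. Rule (\ref{item-tb}) moves $\xi(r)$ from the tail of $\xi^\ast(\sigma::r)$ to the head of the new $a$, so the common list $\xi^\ast(\sigma)::\xi(r)::a$ is preserved. Rule (\ref{item-tciii}) is an underlying transition $\stac{r}{y}\rhd\stac{r'}{y}$ of $\A$ whose target has $y\in Y$ in the answer tape; condition (\ref{item-aaiii}) of Definition \ref{def-aa} then forces $\xi(r)=\xi(r')$, and $\xi_\omega$ is preserved.

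The remaining three cases invoke the descent hypothesis on $\A$. Rule (\ref{item-ta}) appends a single new element $\xi(\rho(\xi^\ast(\sigma)))$ to the finite support of $\xi_\omega$, which is precisely an instance of the first defining clause of $\succ_\omega$. Rule (\ref{item-tci}) alters the finite support only at position $|\sigma|$, replacing $\xi(r)$ by $\xi(r')$; since the descent property gives $\xi(r)\succeq\xi(r')$, we either get equality or an instance of the second clause of $\succ_\omega$ at position $i=|\sigma|$ with $k=i+1$ (i.e.\ no further truncation). Rule (\ref{item-tcii}) both alters position $|\sigma|$ (from $\xi(r)$ to $\xi(r')$) and truncates the trailing $a$; the strict sub-case $\xi(r)\succ\xi(r')$ fits the second clause of $\succ_\omega$ directly, with $i=|\sigma|$ and $k=|\sigma|+1+|a|$.

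The step I expect to be most delicate is rule (\ref{item-tcii}) in the degenerate sub-case $\xi(r)=\xi(r')$, where a truncation of a nonempty $a$ must be absorbed by the extension of $\succ_\omega$ to $X^\NN$. Here one should exploit the flexibility in the definition $\alpha\succ_\omega\beta\iff\alpha=a'::\zero,\ \beta=b'::\zero,\ a'\succ_\omega b'$, padding the finite representatives with trailing zeros as needed so that an appropriate strict descent within the common prefix witnesses the second clause of $\succ_\omega$. Once this sub-case is settled the theorem follows by assembling the seven cases.
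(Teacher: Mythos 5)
Your decomposition into the seven transition types ($\orhd_1$, $\orhd_2$, and rules (\ref{item-ta}), (\ref{item-tb}), (\ref{item-tci})--(\ref{item-tciii})) is exactly the route the paper takes: its entire proof of Theorem \ref{thm-desc} is the sentence ``this follows easily by inspecting each of the rule types.'' Your handling of six of the seven cases is correct, including the appeal to condition (\ref{item-aaiii}) of Definition \ref{def-aa} for rule (\ref{item-tciii}) and the identification of rule (\ref{item-ta}) with the first clause and of (\ref{item-tci}) and the strict sub-case of (\ref{item-tcii}) with the second clause of $\succ_\omega$.

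However, the degenerate sub-case of rule (\ref{item-tcii}) that you flag as delicate is a genuine problem, and the repair you sketch does not work. There $\stac{\sigma::r,a}{n,y}\rhd_\omega\stac{\sigma::r',[]}{\Box,\Box}$ with $\xi(r)=\xi(r')$ and $a$ containing a nonzero entry, so $\xi_\omega$ passes from $\alpha=\xi^\ast(\sigma)::\xi(r)::a::\zero$ to $\beta=\xi^\ast(\sigma)::\xi(r)::\zero$. Unwinding the definition, $\alpha\succ_\omega\beta$ via the second clause requires an index $i$ with $\alpha_j=\beta_j$ for $j<i$, $\alpha_i\succ\beta_i$, and $\beta_j=0$ for all $j>i$; trailing-zero padding of the finite representatives cannot change any of this. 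Choosing $i$ inside the common prefix $\xi^\ast(\sigma)::\xi(r)$ forces $\alpha_i\succ\alpha_i$ (and would additionally require the remainder of that prefix to be zero), while choosing $i=|\sigma|+1$, the first position of disagreement, requires $a_0\succ 0$, for which there is no hypothesis. The first clause only lengthens sequences, so it cannot help either, even under transitive closure. Hence $\xi_\omega(s)\succeq_\omega\xi_\omega(t)$ simply fails here, and the case is not vacuous: nothing in the definition of an approximation algorithm prevents $\A$ from having a transition $\stac{r}{y}\rhd\stac{r'}{\Box}$ with $\xi(r)=\xi(r')$, and such a transition can fire from a state of $\A_\omega$ with nonempty $a$. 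The honest conclusion is that the theorem needs a slightly stronger hypothesis --- for instance that every transition of $\A$ into a fresh query, $\stac{r}{y}\rhd\stac{r'}{\Box}$, is \emph{strictly} descending, $\xi(r)\succ\xi(r')$, as indeed holds in both of the paper's examples --- under which your treatment of (\ref{item-tcii}) goes through uniformly via the second clause with $i=|\sigma|$. The paper's one-line proof does not address this point either; what is missing from your write-up is only the recognition that this sub-case must be excluded by hypothesis rather than absorbed by the definition of $\succ_\omega$.
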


\begin{proof}
This follows easily by inspecting each of the rule types.
\end{proof}

Producing explicit upper bounds on mind changes is a little more difficult: even if $\A$ makes at most one mind change, $\A_\omega$ could make unbounded many. However, if we restrict the first oracle so that $\A_\omega$ only considers registers $R^\ast$ of length at most $N$, we obtain some simple bounds.

For example, suppose that $\A$ terminates and requires at most $h(|u|)$ mind changes on input $u\in X^\ast$, and that $\F_1^N:=\{\phi_1:X^\NN\to\NN\; \ \; (\forall \gamma)(N<\phi_1(\initSeg{\gamma}{N}::\zero))0\}$. Then $\A_\omega$ terminates and requires at most
\begin{equation*}
\sum_{j=0}^N\prod_{i=0}^j h(i)
\end{equation*}
calls to the first oracle, and at most
\begin{equation*}
\prod_{i=0}^N h(i)
\end{equation*}
calls to the second, when run on $\phi_1\in \F_1^N$. We leave these as claims without proofs, since we do not want to go into the messy details: We simply wanted to give a sense of what could be obtained with further analysis.

%%%%%%%%%%%%%%%%%%%%%%%%%%%%%%%%%%%%%%%%%%%%%%%%%
%%%%%%%%%%%%%%%%%%%%%%%%%%%%%%%%%%%%%%%%%%%%%%%%%
\section{Case study: The infinite tape}
%%%%%%%%%%%%%%%%%%%%%%%%%%%%%%%%%%%%%%%%%%%%%%%%%
%%%%%%%%%%%%%%%%%%%%%%%%%%%%%%%%%%%%%%%%%%%%%%%%%
\label{sec-tape}

We conclude the article with a concrete example, which captures all of the ideas presented in the previous section, and moreover illustrates the approach to program extraction in general we outlined in Section \ref{sec-thmalg}. Our example is the following:
\begin{theorem}
\label{thm-tape}
Given some infinite sequence of booleans $b\in\{0,1\}^\NN$, for any $N\in\NN$ there is a constant subsequence $b_{v_0},\ldots,b_{v_{N-1}}$ of length $N$.
\end{theorem}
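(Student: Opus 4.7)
The plan is to derive Theorem \ref{thm-tape} from the infinite pigeonhole principle, which is a consequence of a single application of the rule of dependent choice to a $\Pi_1$-predicate, and to interpret this computationally via the construction of Section \ref{sec-dc}. Classically, infinite pigeonhole says that given any finite monochromatic sequence $u=\langle v_0,\ldots,v_{k-1}\rangle$ of indices we can extend it by one more index of the same colour; dependent choice then produces an infinite monochromatic subsequence $\alpha$, of which we retain the first $N$ entries to obtain the required $v_0,\ldots,v_{N-1}$.

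To turn this outline into an algorithm, I would first formulate the $\Pi_1$-step $\forall u \exists x \forall y\, P(u,x,y)$ using a predicate $P$ chosen so that its n.c.i.\ admits a simple one-mind-change learning realizer: begin by guessing that the next index has colour $c_u$ (the colour of the last entry of $u$, or $0$ when $u$ is empty), search forward until either such an index is found, or the oracle $f$ exhibits via $f(x)$ an index of the opposite colour that refutes the guess, in which case the algorithm switches colour and outputs the oracle-supplied index. I would then encode this as an approximation algorithm $\A$ of sort $\NN^\ast,\NN,\NN$ in the sense of Definition \ref{def-aa}, whose state records the current colour-guess, the current search position and the oracle's latest answer. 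Checking that $\A$ satisfies $P$, terminates and is descending in a suitable well-founded ordering is a routine inspection of the transition rules.

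Next I would invoke the construction $\A\mapsto\A_\omega$ of Definition \ref{def-baralg}: Theorem \ref{thm-dc} guarantees that $\A_\omega$ satisfies the corresponding $P_\omega$, and Theorem \ref{thm-term} yields termination for any continuous first oracle, in particular for an oracle that demands at least $N$ entries of the choice sequence. Reading off the first $N$ components of the output sequence $\alpha$ then gives the desired monochromatic indices $v_0,\ldots,v_{N-1}$. Along the way, I would apply Theorem \ref{thm-graph} to draw the control-flow graph of $\A_\omega$, which should provide a transparent picture of how the monochromatic subsequence is constructed step by step.

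The main obstacle is designing the step-predicate $P(u,x,y)$ so that simultaneously (a) $\forall u\exists x\forall y\,P(u,x,y)$ is classically equivalent to the infinite pigeonhole step, and (b) the associated n.c.i.-realizer is exactly the one-mind-change learning algorithm sketched above, so that $\A_\omega$ inherits a clean quantitative behaviour. A secondary task is to verify, via the sort of counting argument indicated at the end of Section \ref{sec-dc-understand}, that the total number of mind-changes at both oracle levels of $\A_\omega$ is bounded explicitly in terms of $N$.
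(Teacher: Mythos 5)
Your overall architecture (a single $\Pi_1$ instance of the rule of dependent choice, realized via the construction $\A\mapsto\A_\omega$, combined with an interpreted implication in the style of Section \ref{sec-thmalg-imp}) is exactly the paper's, but the step you defer as ``the main obstacle'' --- designing the predicate $P(u,x,y)$ --- is where the actual proof lives, and the one concrete formulation you offer for it is false. The ``infinite pigeonhole step'' as you state it, namely that \emph{every} finite monochromatic sequence of indices can be extended by a further index of the same colour, fails already for $b=0,1,1,1,\ldots$ and $u=\langle 0\rangle$: there is no later index of colour $0$. Consequently no choice of $P$ can make $(\forall u)(\exists x)(\forall y)P(u,x,y)$ both true and equivalent to that step, so the premise of the rule of dependent choice would not be available. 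The accompanying algorithmic suggestion, ``search forward until an index of colour $c_u$ is found'', is an unbounded search over the ground data $b$ rather than an interaction with the counterexample oracle, and need not terminate.

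The paper avoids this by \emph{not} asking the choice sequence to be monochromatic. It takes $P([u_0,\ldots,u_{n-1}],x,y):\equiv u_{n-1}<x\wedge((b_x=1\wedge x\leq y)\to b_y=1)$, which is classically witnessed for every $u$ (either some $i>u_{n-1}$ has $b_i=0$, or all such $i$ have $b_i=1$), and whose n.c.i.\ admits a one-mind-change realizer: guess $x:=u_{n-1}+1$ and, if the oracle returns $y\geq x$ with $b_x=1$ and $b_y=0$, switch to $x:=y$, which then satisfies $P$ outright since $b_y=0$. The constant subsequence is recovered only at the end by a case distinction on the resulting $\alpha$: either $b_{\alpha_0}=\dots=b_{\alpha_{N-1}}=0$, or some $n<N$ has $b_{\alpha_n}=1$, in which case $P$ forces $b_{\alpha_n}=b_{\alpha_n+1}=\dots=b_{\alpha_n+N-1}=1$ and one takes the \emph{consecutive} block starting at $\alpha_n$. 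This final case split, encoded in the functions $f$ and $g$ of Lemma \ref{lem-param}, is the missing idea in your proposal; without it, or some substitute that similarly weakens the per-step requirement to something classically valid for all $u$, the argument does not go through.
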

This is the so-called 'infinite tape' example, and was chosen due it its simplicity, and the fact that its constructive content has been studied in several places. Our formulation is most closely related to that of Seisenberger \cite[Section 3]{Seisenberger(2008.0)}: We will analyse a variant of the proof given there, which is adapted so that dependent choice is used in its rule form. Note that dependent choice is of course not needed to prove Theorem \ref{thm-tape}, which can be formalised in arithmetic without recourse to set theoretic axioms. Nevertheless, similarly to \cite{Seisenberger(2008.0)}, our analysis here uses $\A_\omega$ in a weak way, finitary way, and thus produces a simple program which could be formalised using primitive recursion.
\begin{proof}[Proof of Theorem \ref{thm-tape}]
By dependent choice, construct an increasing function $\alpha:\NN\to\NN$ as follows: Suppose $\alpha_0,\ldots,\alpha_{n-1}$ has already been chosen. Either there exists some $i>\alpha_{n-1}$ with $b_i=0$, in which case define $\alpha_n:=i$, or $b_i=1$ for all $i>\alpha_{n-1}$, in which case define $\alpha_n:=\alpha_{n-1}+1$. Then $\alpha$ is a monotone function such that, for any $n\in\NN$,
\begin{equation*}
b_{\alpha_n}=1\to \forall k\geq \alpha_{n}(b_k=1).
\end{equation*}
Now either $b_{\alpha_0}=\ldots=b_{\alpha_{N-1}}=0$, or there is some $n<N$ with $b_{\alpha_n}=1$, in which case $b_{\alpha_n}=b_{\alpha_n+1}=\ldots=b_{\alpha_n+N-1}=1$. Either way, we have found a constant subsequence of length $N$.
\end{proof}

The proof above is characteristic of many proofs which use dependent choice, and has exactly the kind of structure which lends itself to our approach: The instance of choice is essentially the only difficult non-constructive feature of the proof, and following the pattern presented in Section \ref{sec-thmalg-imp} we can interpret the surrounding reasoning fairly straightforwardly, and without recourse to any complicated machinery. Then, utilising our construction of the previous section, we obtain an elegant and transparent algorithm which finds the required constant subsequence by building an approximation to the choice sequence used in the classical proof.

%%%%%%%%%%%%%%%%%%%%%%%%%%%%%%%%%%%%%%%%%%%%%%%%%
\subsection{The structure of the proof}
%%%%%%%%%%%%%%%%%%%%%%%%%%%%%%%%%%%%%%%%%%%%%%%%%
\label{sec-tape-structure}

We begin by showing that the proof of Theorem \ref{thm-tape} fits the basic pattern outlined in Section \ref{sec-thmalg-imp}. First, we fix the input data for the theorem, namely a sequence $b\in\{0,1\}^\NN$ and a number $N\in\NN$ as a global parameters. We then define the decidable predicate $P$ on $\NN^\ast\times\NN\times\NN$ as
\begin{equation*}
P([u_0,\ldots,u_{n-1}],x,y):\equiv u_{n-1}<x\wedge ((b_x=1\wedge x\leq y)\to b_y=1).
\end{equation*}
where for $P([],x,y)$ we omit the conjunct $u_{n-1}<j$, and the decidable predicate $Q$ on $\NN^\ast$ as
\begin{equation*}
Q(v):=|v|=N\wedge \forall i<N(v_i<v_{i+1}\wedge b_{v_i}=b_{v_{i+1}})
\end{equation*}
Note that if $v$ satisfies $Q(v)$ then it gives us our constant subsequence of length $N$, and hence Theorem \ref{thm-tape} is just the statement $(\exists v\in\NN^\ast) Q(v)$. The formal structure of our proof can now be given precisely as the implication
\begin{equation*}
(\ast) \ \ \ (\exists\alpha:\NN\to\NN)(\forall n,y\in\NN)P(\initSeg{\alpha}{n},\alpha_n,y)\to (\exists v\in\NN^\ast) Q(v).
\end{equation*}
To see this, note that by the premise of $(\ast)$ there exists a sequence $\alpha$ satisfying
\begin{equation*}
\alpha_{n-1}<\alpha_n\wedge ((b_{\alpha_n}=1\wedge\alpha_n\leq y)\to b_y=1)
\end{equation*}
for all $n,y$. Either $b_{\alpha_0}=\ldots=b_{\alpha_{N-1}}=0$, in which case since $\alpha_0<\alpha_1<\ldots<\alpha_{N-1}$ we can set $v:=[\alpha_0,\ldots,\alpha_{N-1}]$, or there is some $m<N$ such that $b_{\alpha_m}=1$, in which case $b_{\alpha_m}=b_{\alpha_{m}+1}=\ldots=b_{\alpha_{m}+N-1}=1$ and we can set $v:=[\alpha_m,\alpha_m+1,\ldots,\alpha_m+N-1]$.

%%%%%%%%%%%%%%%%%%%%%%%%%%%%%%%%%%%%%%%%%%%%%%%%%
\subsection{Interpreting the implication}
%%%%%%%%%%%%%%%%%%%%%%%%%%%%%%%%%%%%%%%%%%%%%%%%%
\label{sec-tape-imp}

Referring back to the discussion in Section \ref{sec-thmalg-imp}, our next step is to find a pair of functions $f:\NN^\NN\to \NN\times\NN$ and $g:\NN^\NN\to \NN^\ast$ such that
\begin{equation*}
(\ast\ast) \ \ \ (\forall\alpha)(P(\initSeg{\alpha}{f_1(\alpha)},\alpha_{f_1(\alpha)},f_2(\alpha))\to Q(g(\alpha)),
\end{equation*}
where $f_i$ denotes the $i$th projection of $f$. The function $f$ characterises `how much' of the choice sequence $\alpha$ we actually need, while $g$ tells us how to convert this information into a witness for the theorem. Giving these functions is rather simple, since they can essentially be read of from the discussion in Section \ref{sec-tape-structure}. In short, we only need our $\alpha$ to satisfy $P(\initSeg{\alpha}{n},\alpha_n,y)$ up to $n:=N-1$ and $y:=\alpha_m+N-1$ where $m<N$ is such that $b_{\alpha_m}=1$. However, we need to be a little careful here: in order to ensure that our choice sequence is valid not just for these points but for everything up to these points we should incorporate a bounded search into our functions as follows:
\begin{lemma}
\label{lem-param}
Let $f$ and $g$ be defined as follows:
\begin{equation*}
\begin{aligned}
f_1(\alpha)&:=\mbox{least $n<N$ s.t. $\neg(\alpha_{n-1}<\alpha_n)\vee b_{\alpha_n}=1$, else $N-1$}\\
f_2(\alpha)&:=\mbox{if $b_{\alpha_n}=1$ then least $\alpha_n\leq y\leq\alpha_n+N-1$ s.t. $b_y=0$, else $\alpha_n+N-1$}\\
g(\alpha)&:=[\alpha_n,\alpha_n+1,\ldots,\alpha_n+N-1]\mbox{ if $b_{\alpha_n}=1$, else }[\alpha_0,\ldots,\alpha_{N-1}]
\end{aligned}
\end{equation*}
where $n:=f_1(\alpha)$ in the definitions of $f_2(\alpha)$ and $g(\alpha)$. Then $f$ and $g$ satisfy $(\ast\ast)$.
\end{lemma}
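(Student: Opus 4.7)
The plan is to fix an arbitrary $\alpha$, assume the antecedent $P(\initSeg{\alpha}{n}, \alpha_n, f_2(\alpha))$ where $n := f_1(\alpha)$, and deduce $Q(g(\alpha))$ by splitting on the case distinction $b_{\alpha_n} = 1$ versus $b_{\alpha_n} = 0$ that already drives the definitions of $f_2$ and $g$. The only preparatory observation I need is an invariant extracted from the minimality built into $f_1$: for every $m < n$, neither disjunct of the triggering condition $\neg(\alpha_{m-1}<\alpha_m) \vee b_{\alpha_m} = 1$ holds, hence $\alpha_{m-1} < \alpha_m$ (when $m > 0$) and $b_{\alpha_m} = 0$; moreover, if $f_1$ returns $N-1$ via the default \emph{else} branch, the same invariant extends to all $m \leq N-1$.

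In the case $b_{\alpha_n} = 1$, the candidate witness $g(\alpha) = [\alpha_n, \alpha_n + 1, \ldots, \alpha_n + N - 1]$ has length $N$ and is strictly increasing by construction, so only the constancy of $b$ on this block remains to be checked. Here $f_2(\alpha)$ is defined as the least $y$ in $[\alpha_n, \alpha_n + N - 1]$ with $b_y = 0$ when one exists, defaulting to $\alpha_n + N - 1$ otherwise. If such a $y$ existed, the second conjunct of $P$ together with $b_{\alpha_n} = 1$ and $\alpha_n \leq y$ would force $b_y = 1$, contradicting the defining property of $y$; hence no such $y$ exists and $b_k = 1$ for every $k$ in the block, giving $Q(g(\alpha))$.

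In the case $b_{\alpha_n} = 0$, I would argue that $n$ cannot be the genuine least index triggering the condition in the definition of $f_1$. For $n > 0$, the failure of $b_{\alpha_n} = 1$ would force $\neg(\alpha_{n-1} < \alpha_n)$, directly contradicting the first conjunct of $P$; for $n = 0$, the triggering condition reduces to $b_{\alpha_0} = 1$, which already fails. Hence the default branch must have fired, so $n = N - 1$ and the preparatory invariant extends through $m = N - 1$, giving $\alpha_0 < \ldots < \alpha_{N-1}$ with $b_{\alpha_i} = 0$ for all $i < N$. Then $g(\alpha) = [\alpha_0, \ldots, \alpha_{N-1}]$ satisfies $Q$. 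The main obstacle, which is really a bookkeeping matter, is keeping the two sources of $n = N - 1$ (the genuine minimum versus the default) separate in Case 2; once this is done via the first conjunct of $P$, everything else is a direct unpacking of the definitions.
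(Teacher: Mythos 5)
Your proposal is correct and follows essentially the same argument as the paper: the same case split on $b_{\alpha_n}$, the same contradiction via the minimality of $f_2$ in the first case, and the same use of the first conjunct of $P$ in the second case to force the default branch of $f_1$ and hence $n=N-1$ with the full increasing-and-zero invariant. Your treatment is merely a little more explicit about the $n=0$ convention and the two ways $f_1$ can return $N-1$, which the paper leaves implicit.
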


\begin{proof}
This is just a simple reformulation of the argument in Section \ref{sec-tape-structure}. Set $n:=f_1(\alpha)$. There are two cases. Suppose that $b_{\alpha_n}=1$, but there is some $\alpha_n\leq y\leq\alpha_n+N-1$ such that $b_y=0$. If this were the case, then $f_2(\alpha)$ would return this $y$, but $P(\initSeg{\alpha}{n},\alpha_n,y)$ would then be false, a contradiction. Therefore it must be the case that $b_y=1$ for all $\alpha_n\leq y\leq \alpha_n+N-1$, in which case $Q(v)$ holds.

On the other hand, suppose that $b_{\alpha_n}=0$. If $\neg(\alpha_{n-1}<\alpha_n)$ then this would contradict $P(\initSeg{\alpha}{n},\alpha_n,f_2(\alpha))$, and thus it must be the case that for all $n<N$ we have $(\alpha_{n-1}<\alpha_n)\wedge b_{\alpha_n}=0$, and thus $Q(v)$ holds.
\end{proof}

The functions defined in Lemma \ref{lem-param}, though not entirely trivial, are little more than bounded searches which check very basic properties of $\alpha$, and so in particular their algorithmic behaviour is clear. Thus, referring to Proposition \ref{prop-imp}, in order to obtain a full procedure which computes a witness $v$, it remains to find an approximation machine which satisfies $P(\initSeg{\alpha}{n},\alpha_n,y)$.

%%%%%%%%%%%%%%%%%%%%%%%%%%%%%%%%%%%%%%%%%%%%%%%%%
\subsection{Interpreting dependent choice}
%%%%%%%%%%%%%%%%%%%%%%%%%%%%%%%%%%%%%%%%%%%%%%%%%
\label{sec-tape-imp}

We now appeal to the machinery of Section \ref{sec-dc} to construct a machine which terminates and satisfies $P(\initSeg{\alpha}{n})$. For the latter property, it is enough by Theorem \ref{thm-dc} to construct a machine $\A$ which satisfies $P(u,x,y)$, and provided $\A$ terminates, termination of $\A_\omega$ follows from Theorem \ref{thm-term} and the fact that $f_1$ as defined in Lemma \ref{lem-param} is clearly continuous (in fact, it only ever looks at the first $N$ elements of its input $\alpha$).

So our challenge is to find a terminating machine $\A$ of sort $\NN^\ast,\NN,\NN$ which satisfies $P(u,x,y)$. But this is, again, rather straightforward. Intuitively: Given some input $u$ we first try $x:=u_{n-1}+1$. Given some $y$ returned by the oracle, $P(u,x,y)$ then holds unless $b_{x}=1\wedge x\leq y\wedge b_y=0$. But then taking $x:=y$ we trivially have $P(u,x,y')$ for any subsequent oracle answer $y'$, since $x>u_{n-1}$ and $b_x=0$ (for the simpler case $|u|=0$ we just take $x:=0$ for our first attempt). We now want to capture this idea as an approximation algorithm, which we can do by adapting that of Section \ref{sec-thmalg-ex}: Let $\A$ be given by:
\begin{itemize}

\item $R:=\{\cs,\ce_1,\ce_2\}\times \NN$ for symbols $\cs$ and $\ce$;

\item $Q$ consists of states of the form  $\stac{\cs/\ce_1,x}{\Box}$;

\item $E$ consists of states of the form $\stac{\ce_i,x}{y}$;

\item $\rho(u):=(\cs,u_{n-1}+1)$ with $\rho([]):=(\cs,0)$ and $\xi(c,x):=x$,

\end{itemize}
together with the transition
\begin{equation*}
\stac{\cs,x}{y}\rhd\begin{cases}\stac{\ce_1,y}{\Box} & \mbox{if $b_x=1\wedge x\leq y\wedge b_y=0$}\\
\stac{\ce_2,x}{y} & \mbox{otherwise}\end{cases}
\end{equation*}
\begin{lemma}
\label{lem-A}
$\A$ satisfies $P(u,x,y)$ and terminates.
\end{lemma}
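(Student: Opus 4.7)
The plan is to analyse all possible runs of $\A$ directly: because the only non-oracle transition is the rule on states of shape $\stac{\cs,x}{y}$, and neither $\ce_1$ nor $\ce_2$ end states have any outgoing transition, each run has a very short and uniform shape.

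\emph{Termination.} Starting from $\rho_u = \stac{\cs, x_0}{\Box}$ (with $x_0 := u_{|u|-1}+1$, or $x_0:=0$ if $u=[]$), the only possible next step is an oracle query to $\stac{\cs,x_0}{y_0}$, followed by the unique transition rule, which lands us either in $\stac{\ce_2,x_0}{y_0}\in E$ (done) or in $\stac{\ce_1,y_0}{\Box}\in Q$. In the latter case, one more oracle query gives $\stac{\ce_1,y_0}{y_1}\in E$, and we are done. So every run has length at most four and terminates on every oracle.

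\emph{Correctness.} We need, for any end state $\stac{\ce_i,x}{y}$ reached from $\rho_u$, that $P(u,x,y)$ holds. I split into the two cases above. If the end state is $\stac{\ce_2,x_0}{y_0}$, then $\xi(\ce_2,x_0)=x_0=u_{|u|-1}+1$, so the first conjunct $u_{|u|-1}<x_0$ is immediate (and trivially so if $u=[]$). The transition rule chose the $\ce_2$-branch precisely because the $\ce_1$-guard failed, i.e.\ $\neg(b_{x_0}=1\wedge x_0\leq y_0 \wedge b_{y_0}=0)$; rearranging, $b_{x_0}=1\wedge x_0\leq y_0$ implies $b_{y_0}=1$, which is the second conjunct of $P(u,x_0,y_0)$. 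If instead the end state is $\stac{\ce_1,y_0}{y_1}$, then $\xi=y_0$, and the $\ce_1$-branch was taken because $b_{x_0}=1\wedge x_0\leq y_0\wedge b_{y_0}=0$. In particular $u_{|u|-1}<x_0\leq y_0$, giving the first conjunct, and $b_{y_0}=0$ makes the implication $(b_{y_0}=1\wedge y_0\leq y_1)\to b_{y_1}=1$ vacuously true.

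There is no real obstacle here: the whole argument is a four-line case analysis on the two possible end configurations, once one notices that $\ce_1$- and $\ce_2$-states have no outgoing transition and that the single transition rule's two branches are exactly tailored to make $P$ hold. The only minor care needed is the boundary case $u=[]$, in which we simply omit the conjunct $u_{|u|-1}<x$, matching the convention in the definition of $P$.
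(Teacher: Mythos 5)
Your proof is correct and follows essentially the same route as the paper's: a direct case split on whether the guard $b_{x_0}=1\wedge x_0\leq y_0\wedge b_{y_0}=0$ fires, with the $\ce_2$-branch satisfying $P$ by the negation of the guard and the $\ce_1$-branch satisfying it vacuously via $b_{y_0}=0$ and $u_{|u|-1}<x_0\leq y_0$. The paper's version is just a terser rendering of the same four-step run analysis, including the same remark that the $u=[]$ case is similar but easier.
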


\begin{proof}
We have $\stac{\cs,u_{n-1}+1}{\Box}\orhd\stac{\cs,u_{n-1}+1}{y}$ and there are two possibilities. Either $b_{u_{n-1}+1}=1\wedge u_{n-1}+1\leq y\to  b_{y}=1$ and $\stac{\cs,u_{n-1}+1}{y}\rhd \stac{\ce_2,u_{n-1}+1}{y}\in E$, and we have $P(u,u_{n-1}+1,y)$ by definition, or $b_{u_{n-1}+1}=1\wedge u_{n-1}+1\leq y\wedge  b_{y}=0$ and $\stac{\cs,u_{n-1}+1}{y}\rhd \stac{\ce_1,y}{\Box}\orhd \stac{\ce_1,y}{y'}$ and $P(u,y,y')$ holds since $u_{n-1}<y$ and $b_y=0$. Similar but easier for $u=[]$.
\end{proof}

We can now, in a single step, construct a machine which constructs our choice sequence.
\begin{theorem}
\label{thm-Aw}
Let $\A_\omega$ of sort $1,\NN^\NN,[\NN,\NN]$ be defined by
\begin{itemize}

\item $R_\omega:=(\{\cs,\ce_1,\ce_2\}\times\NN)^\ast\times\NN^\ast$;

\item $\rho_1:=\stac{[(\cs,0)],[]}{\Box,\Box}$;

\item $Q_1$ comprises states of the form $\stac{\sigma::(\cs/\ce_1,x),[]}{\Box,\Box}$;

\item $Q_2$ comprises states of the form $\stac{\sigma,[]}{n,\Box}$ with $n<|\sigma|$;

\item $E$ comprises states of the form $\stac{[],a}{n,y}$;

\item $\xi_\omega(\sigma,a):=\sigma_2::a::\zero$ where $\sigma_2\in\NN^\ast$ is the second projection of $\sigma$

\end{itemize}
with transitions given by
\begin{enumerate}[(a)]

\item $\stac{\sigma::(c,x),[]}{n,\Box}\rhd_\omega \stac{\sigma::(c,x)::(\cs,x+1),[]}{\Box,\Box}$;

\item $\stac{\sigma::(\ce_i,x),a}{n,y}\rhd_\omega \stac{\sigma,x::a}{n,y}$;

\item $\stac{\sigma::(\cs,x),a}{n,y}\rhd_\omega \begin{cases}\stac{\sigma::(\ce_1,y),[]}{\Box,\Box} & \mbox{if $b_x=1\wedge x\leq y\wedge b_y=0$}\\
\stac{\sigma::(\ce_2,x),a}{n,y} & \mbox{otherwise}\end{cases}$

\end{enumerate}
Then $\A_\omega$ satisfies $P(\initSeg{\alpha}{n},\alpha_n,y)$.
\end{theorem}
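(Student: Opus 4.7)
The plan is to recognise that the $\A_\omega$ displayed in Theorem \ref{thm-Aw} is nothing other than the result of applying the generic construction of Definition \ref{def-baralg} to the approximation algorithm $\A$ of Lemma \ref{lem-A}, after which the claim follows by a single application of Theorem \ref{thm-dc}. So the body of the proof is a line-by-line verification of this correspondence.

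First I would match the data. Since $R = \{\cs, \ce_1, \ce_2\} \times \NN$ for $\A$, we have $R_\omega = R^\ast \times \NN^\ast$ as prescribed. Because $\xi(c,x) = x$, the map $\xi^\ast$ on $R^\ast$ is precisely the second projection $\sigma_2$, so $\xi_\omega(\sigma,a) = \xi^\ast(\sigma) :: a :: \zero = \sigma_2 :: a :: \zero$. The initial state $\rho_\omega(1) = [\rho([])] = [(\cs,0)]$ agrees with $\rho_1$, and the characterisations of $Q_1$, $Q_2$ and $E_\omega$ are immediate from the descriptions of $Q$ and $E$ in $\A$.

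Next I would verify the three transition rule types. Rule (a) of Theorem \ref{thm-Aw} is the instance of Definition \ref{def-baralg}(\ref{item-ta}) obtained from the identity $\rho(\xi^\ast(\sigma::(c,x))) = \rho(\sigma_2::x) = (\cs, x+1)$. Rule (b) is the instance of Definition \ref{def-baralg}(\ref{item-tb}): since $E$-states in $\A$ have the form $\stac{(\ce_i,x)}{y}$ and $\xi(\ce_i,x) = x$, the condition $\stac{r}{y}\in E$ is automatically satisfied for $r = (\ce_i,x)$, and the transition rewrites as in the statement. For rule (c), note that the only transition of $\A$ acts on a state $\stac{(\cs,x)}{y}$ with non-empty answer, so Definition \ref{def-baralg}(\ref{item-tci}) never fires. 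The two sub-cases of rule (c) correspond to the two branches of the transition of $\A$: the branch $\stac{(\cs,x)}{y} \rhd \stac{(\ce_1,y)}{\Box}$ is handled by Definition \ref{def-baralg}(\ref{item-tcii}), and the branch $\stac{(\cs,x)}{y} \rhd \stac{(\ce_2,x)}{y}$ is handled by Definition \ref{def-baralg}(\ref{item-tciii}), the latter being well-formed because $\xi(\cs,x) = x = \xi(\ce_2,x)$ preserves the query value as required by condition (\ref{item-aaiii}) of Definition \ref{def-aa}.

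With the data matched, Lemma \ref{lem-A} provides that $\A$ satisfies $P$, and Theorem \ref{thm-dc} then yields that $\A_\omega$ satisfies $P_\omega$, which is by definition $P(\initSeg{\alpha}{n}, \alpha_n, y)$. The only genuine obstacle is the bookkeeping in rule (c): one must check that the syntactic dichotomy between $(\ce_1,y)$ and $(\ce_2,x)$ in Theorem \ref{thm-Aw} aligns correctly with the dichotomy between cases (\ref{item-tcii}) (answer discarded, query value may change) and (\ref{item-tciii}) (answer preserved, query value unchanged) in Definition \ref{def-baralg}; the rest of the verification is routine unfolding.
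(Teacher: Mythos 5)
Your proposal is correct and takes exactly the route the paper does: the paper's own proof is the one-liner ``By Lemma \ref{lem-A} combined with Theorem \ref{thm-dc}'', and your contribution is simply to make explicit the (routine but worthwhile) verification that the displayed $\A_\omega$ is the instance of Definition \ref{def-baralg} applied to the $\A$ of Lemma \ref{lem-A}. All the details you check, including the alignment of the two branches of rule (c) with cases (\ref{item-tcii}) and (\ref{item-tciii}), are accurate.
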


\begin{proof}
By Lemma \ref{lem-A} combined with Theorem \ref{thm-dc}.
\end{proof}

We are now able to give a clear, algorithmic interpretation of our proof of Theorem \ref{thm-tape}.

\begin{theorem}
\label{thm-tapealg}
Suppose that $\A_\omega$ is run on $f_1$ and $f_2$ as defined in Lemma \ref{lem-param}. Then 
\begin{equation*}
\stac{[(\cs,0)],[]}{\Box,\Box}\rhd_\omega^\ast \stac{[],a}{n,y}
\end{equation*}
where for $\alpha:=a::\zero$ we have that $b_{\alpha_n}=1$ implies $b_{\alpha_n}=b_{\alpha_n+1}=\ldots=b_{\alpha_n+N-1}$ and $b_{\alpha_n}=0$ implies $\alpha_0<\ldots,\alpha_{N-1}$ and $b_{\alpha_0}=\ldots=b_{\alpha_{N-1}}$.
\end{theorem}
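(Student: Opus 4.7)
The proof is essentially a composition of the machinery already built up: we need termination of $\A_\omega$ on the concrete oracles $f_1, f_2$ of Lemma \ref{lem-param}, followed by an application of Theorem \ref{thm-Aw} and a case analysis exactly mirroring Lemma \ref{lem-param}. So the plan is to verify each of these three ingredients and then stitch them together.

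First I would establish that the run reaches an end state of the claimed shape. By Lemma \ref{lem-A}, $\A$ terminates, so by Theorem \ref{thm-term} it suffices to check that $f_1 \in \F_1$. But $f_1$ as defined in Lemma \ref{lem-param} only inspects $\alpha_0, \ldots, \alpha_{N-1}$, so it is (uniformly) continuous, and the argument immediately after Theorem \ref{thm-term} applies, giving $f_1 \in \F_1$. Hence $\A_\omega$ terminates on the pair $(f_1, f_2)$ starting from $\rho_\omega(1) = \stac{[(\cs,0)],[]}{\Box,\Box}$, and by the definition of $E_\omega$ the end state must have the form $\stac{[],a}{n,y}$.

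Next I would extract the correctness information from the run. By Theorem \ref{thm-Aw}, the approximation algorithm $\A_\omega$ satisfies $P_\omega(\alpha, [n,y]) \equiv P(\initSeg{\alpha}{n}, \alpha_n, y)$, so for $\alpha := \xi_\omega([],a) = a :: \zero$ we obtain
\begin{equation*}
P(\initSeg{\alpha}{n}, \alpha_n, y), \qquad n = f_1(\alpha), \qquad y = f_2(\alpha),
\end{equation*}
where the identification $n = f_1(\alpha)$, $y = f_2(\alpha)$ follows from Lemma \ref{lem-nci} applied to the two oracles: the unique oracle answers recorded in the end state must equal the oracles evaluated at $\xi_\omega([],a)$.

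Finally I would convert this into the stated conclusion by repeating the case split of Lemma \ref{lem-param}. Unpacking the definition of $P$, we have $\alpha_{n-1} < \alpha_n$ (when $n>0$) and $(b_{\alpha_n}=1 \wedge \alpha_n \leq y) \to b_y = 1$. If $b_{\alpha_n} = 1$, then by definition of $f_2$ either $f_2(\alpha) = \alpha_n + N - 1$ with $b_{\alpha_n} = \ldots = b_{\alpha_n+N-1} = 1$ (otherwise $f_2$ would return some $y$ with $b_y = 0$, contradicting $P$), giving the first conclusion. If $b_{\alpha_n} = 0$, then the minimality in the definition of $f_1$ together with $P$ forces $\alpha_{i-1} < \alpha_i$ and $b_{\alpha_i} = 0$ for all $i < N$, giving the second conclusion. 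The main (minor) subtlety is keeping track of the alignment between $n, y$ in the end state and the values $f_1(\alpha), f_2(\alpha)$, which is handled cleanly by Lemma \ref{lem-nci}; beyond this, the result is purely a combination of previous theorems.
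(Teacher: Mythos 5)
Your proposal is correct and follows essentially the same route as the paper: termination via continuity of $f_1$ and Theorem \ref{thm-term}, and correctness by combining the satisfaction of $P_\omega$ with the case analysis of Lemma \ref{lem-param}. The only difference is one of packaging — the paper simply cites Proposition \ref{prop-imp} together with Lemma \ref{lem-param} for the second half, whereas you unfold their proofs (via Lemma \ref{lem-nci} and the explicit case split), which amounts to the same argument.
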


\begin{proof}
Termination of $\A_\omega$ on $f_1$ follows from continuity of $f_1$ combined with Theorem \ref{thm-term}. The rest follows from Proposition \ref{prop-imp} and Lemma \ref{lem-param}. 
\end{proof}

Theorem \ref{thm-tapealg} was obtained using a combination of intuition (in working out the functions $f_1,f_2$ and $g$ together with the algorithm $\A$) and theory (the application of Theorems \ref{thm-dc} and \ref{thm-term}). The resulting procedure for computing a constant subsequence of $b$ has a clear description as a sequential algorithm which builds the underlying choice sequence.

We now discuss some of the additional concepts we introduced in Section \ref{sec-thmalg-graphs}. Similarly to Section \ref{sec-thmalg-ex}, we can construct a simple control flow graph for $\A$: Let $\Sigma:=\{\cs,\css,\ce,\ces\}$ and define $\sigma\stac{\cs,x}{\Box}:=\css$, $\sigma\stac{\cs,x}{y}:=\cs$, $\sigma\stac{\ce_1,x}{\Box}:=\ces$ and $\sigma\stac{\ce_i,x}{y}:=\ce$. Then the directed graph $G=(\Omega,A\cup A')$ with $\cs A\ce$, $\cs A \ces$, $\cs'A'\cs$ and $\ce'A'\ce$ i.e.
\begin{equation*}\xymatrix{*++[o][F]{\css}\ar@{..>}[r] & *++[o][F]{\cs}\ar@/_1.5pc/^{}[rr]\ar[r] & *++[o][F]{\ce'}\ar@{..>}[r] & *++[o][F]{\ce}}\end{equation*}
is a control flow graph for $\A$ relative to $\sigma$. Implementing the construction given in Theorem \ref{thm-graph}, we obtain the following infinite control flow graph $G_\omega$ for $\A_\omega$:
\begin{equation*}\tiny\xymatrix{ & \ar[dd] & & \ar[dd] & & \ar[dd] & & \ar[dd] \\
*++[o][F]{\tp{1},\css}\ar@{..>}[u] \ar@{..>}[rd] &  & *++[o][F]{\tp{1},\css}\ar@{..>}[u]\ar@{..>}[rd] &  & *++[o][F]{\tp{1},\css}\ar@{..>}[u]\ar@{..>}[rd] &  & *++[o][F]{\tp{1},\css}\ar@{..>}[u]\ar@{..>}[rd] &  \\ 
*++[o][F]{\bt{1},\css}\ar@{..>}[u] & *++[o][F]{\bt{1},\cs}\ar@/_1.5pc/^{}[rr]\ar[r] & *++[o][F]{\bt{1},\ce'}\ar@{..>}[u] & *++[o][F]{\bt{1},\ce}\ar[dd] & *++[o][F]{\bt{1},\css}\ar@{..>}[u] & *++[o][F]{\bt{1},\cs}\ar@/_1.5pc/^{}[rr]\ar[r] & *++[o][F]{\bt{1},\ce'}\ar@{..>}[u] & *++[o][F]{\bt{1},\ce}\ar[dd]\\
*++[o][F]{\tp{0},\css}\ar[u]\ar@{..>}[drrr] & & & & *++[o][F]{\tp{0},\css}\ar[u]\ar@{..>}[drrr] & & & \\
*++[o][F]{\bt{0},\css}\ar@{..>}[u] & & & *++[o][F]{\bt{0},\cs}\ar@/_2.0pc/^{}[rrrr]\ar[r] & *++[o][F]{\bt{0},\ces}\ar@{..>}[u] & & & *++[o][F]{\bt{0},\ce}\ar[d] \\
& & & & & & & \star }
\end{equation*}
Each transition in the algorithm $\A_\omega$ corresponds to an edge in this graph: At each node we have a current approximation $\alpha$ to our choice sequence, and decisions on which edge to follow are based on simple calculations involving $\alpha$ and $b$.

We conclude by giving some concrete examples of how the computation behaves below:

\begin{example}[$N=2$ and $b=0,1,1,\ldots$]
\label{ex-tape1}
In the following, $\alpha$ denotes $\xi_\omega(\sigma,[])$ for the current state:
\begin{equation*}
\begin{aligned}
\stac{[(\cs,0)],[]}{\Box,\Box}&\orhd_1 \stac{[(\cs,0)],[]}{1,\Box} & \mbox{$\neg(\alpha_0<\alpha_1)$}\\
&\rhd_\omega\stac{[(\cs,0),(\cs,1)],[]}{\Box,\Box} &\mbox{$1\geq |\sigma|$}\\
&\orhd_1\stac{[(\cs,0),(\cs,1)],[]}{1,\Box} &b_{\alpha_1}=b_1=1\\
&\orhd_2\stac{[(\cs,0),(\cs,1)],[]}{1,2} & \mbox{$1<|\sigma|$ and $b_1=b_2=1$}\\
&\rhd_\omega\stac{[(\cs,0),(\ce_2,1)],[]}{1,2} & \mbox{$b_1=1\wedge 1\leq 2\to b_2=1$}\\
&\rhd_\omega\stac{[(\cs,0)],[1]}{1,2} & \\
&\rhd_\omega\stac{[(\ce_2,0)],[1]}{1,2} &\mbox{$b_0=1\wedge 0\leq 1\to b_1=1$} \\
&\rhd_\omega\stac{[],[0,1]}{1,2} &
\end{aligned}
\end{equation*}
We have $b_{\alpha_1}=b_1=1$ and so $b_1=b_2=1$, and the algorithm traces the following path through $G_\omega$:
\begin{equation*}\tiny\xymatrix{*++[o][F]{\tp{1},\css} \ar@{..>}[rd] &  & *++[o][F]{\tp{1},\css} &  & *++[o][F]{\tp{1},\css} &  & *++[o][F]{\tp{1},\css} &  \\ 
*++[o][F]{\bt{1},\css}\ar@{..>}[u] & *++[o][F]{\bt{1},\cs}\ar@/_1.5pc/^{}[rr] & *++[o][F]{\bt{1},\ce'} & *++[o][F]{\bt{1},\ce}\ar[dd] & *++[o][F]{\bt{1},\css} & *++[o][F]{\bt{1},\cs} & *++[o][F]{\bt{1},\ce'} & *++[o][F]{\bt{1},\ce} \\
*++[o][F]{\tp{0},\css}\ar[u] & & & & *++[o][F]{\tp{0},\css} & & & \\
*++[o][F]{\bt{0},\css}\ar@{..>}[u] & & & *++[o][F]{\bt{0},\cs}\ar@/_2.0pc/^{}[rrrr] & *++[o][F]{\bt{0},\ces} & & & *++[o][F]{\bt{0},\ce}\ar[d] \\
& & & & & & & \star }
\end{equation*}

\end{example}

\begin{example}[$N=2$ and $b=1,0,1,0,\ldots$]
\label{ex-tape2}
In the following, $\alpha$ denotes $\xi_\omega(\sigma,[])$ for the current state:
\begin{equation*}
\begin{aligned}
\stac{[(\cs,0)],[]}{\Box,\Box}&\orhd_1 \stac{[(\cs,0)],[]}{0,\Box} & \mbox{$b_{\alpha_0}=b_0=1$}\\
&\orhd_2 \stac{[(\cs,0)],[]}{0,1} & \mbox{$0<|\sigma|$ and $b_1=0$}\\
&\rhd_\omega \stac{[(\ce_1,1)],[]}{\Box,\Box} & \mbox{$b_0=1\wedge b_1=0$}\\
&\orhd_1 \stac{[(\ce_1,1)],[]}{1,\Box} & \mbox{$\neg(\alpha_0<\alpha_1)$}\\
&\rhd_\omega \stac{[(\ce_1,1),(\cs,2)],[]}{\Box,\Box} & \mbox{$\neg(1<|\sigma|)$}\\
&\orhd_1 \stac{[(\ce_1,1),(\cs,2)],[]}{1,\Box} & \mbox{$b_{\alpha_1}=b_2=1$}\\
&\orhd_2 \stac{[(\ce_1,1),(\cs,2)],[]}{1,3} & \mbox{$1<|\sigma|$ and $b_3=0$}\\
&\rhd_\omega \stac{[(\ce_1,1),(\ce_1,3)],[]}{\Box,\Box} & \mbox{$b_2=1\wedge 2\leq 3\wedge b_3=0$}\\
&\orhd_1 \stac{[(\ce_1,1),(\ce_1,3)],[]}{1,\Box} & \\
&\orhd_2 \stac{[(\ce_1,1),(\ce_1,3)],[]}{1,4} & \mbox{$1<|\sigma|$}\\
&\rhd_\omega \stac{[(\ce_1,1)],[3]}{1,4} & \\
&\rhd_\omega \stac{[],[1,3]}{1,4} & \\
\end{aligned}
\end{equation*}
We have $b_{\alpha_1}=b_3=0$ and so $b_1=b_3=0$, and the algorithm traces the following path through $G_\omega$:

\begin{equation*}\tiny\xymatrix{ &  & &  & &  & &  \\
*++[o][F]{\tp{1},\css} &  & *++[o][F]{\tp{1},\css} &  & *++[o][F]{\tp{1},\css}\ar@{..>}[rd] &  & *++[o][F]{\tp{1},\css}\ar@{..>}[rd] &  \\ 
*++[o][F]{\bt{1},\css} & *++[o][F]{\bt{1},\cs} & *++[o][F]{\bt{1},\ce'} & *++[o][F]{\bt{1},\ce} & *++[o][F]{\bt{1},\css}\ar@{..>}[u] & *++[o][F]{\bt{1},\cs}\ar[r] & *++[o][F]{\bt{1},\ce'}\ar@{..>}[u] & *++[o][F]{\bt{1},\ce}\ar[dd]\\
*++[o][F]{\tp{0},\css}\ar@{..>}[drrr] & & & & *++[o][F]{\tp{0},\css}\ar[u] & & & \\
*++[o][F]{\bt{0},\css}\ar@{..>}[u] & & & *++[o][F]{\bt{0},\cs}\ar[r] & *++[o][F]{\bt{0},\ces}\ar@{..>}[u] & & & *++[o][F]{\bt{0},\ce}\ar[d] \\
& & & & & & & \star }
\end{equation*}

\end{example}

Note that the second example demonstrates that our program is not just a blind search: The smallest initial segment of $b$ which contains a constant subsequence is that of length $3$ i.e. $b_0=b_2=1$, but our algorithm examines the initial segment of length $4$.

In fact, our algorithm initially tries to find a consecutive subsequence $b_i=b_{i+1}=\ldots=b_{i+N-1}=1$. If there is some $i\leq j<i+N$ with $b_j=0$ then it proposes $b_j=b_{j+1}=\ldots=b_{j+N-1}=1$, and so on. At some point, it either finds a consecutive sequence of constant value $1$, or stops after $N$ failures with a (not necessarily consecutive) subsequence of constant value $0$. This is essentially the same as the program extracted in \cite{Seisenberger(2008.0)}, and demonstrates that proof theoretic methods for extracting programs from dependent choice yield non-trivial algorithms.

%%%%%%%%%%%%%%%%%%%%%%%%%%%%%%%%%%%%%%%%%%%%%%%%%
%%%%%%%%%%%%%%%%%%%%%%%%%%%%%%%%%%%%%%%%%%%%%%%%%
\section{Concluding remarks}
%%%%%%%%%%%%%%%%%%%%%%%%%%%%%%%%%%%%%%%%%%%%%%%%%
%%%%%%%%%%%%%%%%%%%%%%%%%%%%%%%%%%%%%%%%%%%%%%%%%
\label{sec-conc}

As we have emphasised from the start, we consider this paper more than just a novel computational interpretation of the axiom of choice: It is a first step towards a new way of looking at program extraction, which emphasises the high level description of programs as algorithms. As such, there are many directions for future research, and we conclude by mentioning a few of these.

The most immediate application of the ideas presented here would be to study some non-trivial mathematical theorems using sequential algorithms. An obvious candidate would be the infinite Ramsey theorem for pairs, the computational content of which has already been studied in \cite{Kreuzer(2009.0)} and \cite{OliPow(2015.0)} using variants of bar recursion. Here, the construction of Section 5 would essentially play the role of building approximations to the so-called Erd\H{o}s/Rado tree used to prove the theorem in \cite{ErdRad(1984.0)}, and our algorithmic framework would hopefully give a clear description of how the resulting approximations to pairwise monochromatic subsets are built in a sequential manner.

It would be interesting to explore refinements and extensions of the construction developed in Section \ref{sec-dc}. For the case of countable (i.e. non-dependent) choice, a non-sequential variant of our algorithm analogous to the symmetric variant of bar recursion studied in \cite{OliPow(2017.0)} would be more efficient, and would also have the advantage that it could be extended to choice over arbitrary discrete structures, as in Section 6 of \cite{OliPow(2017.0)}. More interesting would be the prospect of generalising our construction to choice over non-discrete spaces by working with some countable basis for that space. Model theoretic issues of this kind, which would be tricky to represent in a formal theory like System T, could perhaps be more smoothly managed in our more informal setting.

Finally, a much broader and more challenging problem would be to explore the connection, already hinted at in the introduction, between our algorithmic construction of choice sequences and techniques found in dynamical algebra. A recurring theme in the latter is the development of algorithms which mimic the use of Zorn's lemma. These algorithms are typically devised without the rigorous application of proof theoretic techniques, and are often given illuminating diagrammatic descriptions, as in e.g. \cite[Section 6]{Schuster(2013.0)} and \cite[Section 3]{Yengui(2008.0)}. It would be fascinating to apply the techniques of Section \ref{sec-dc} to concrete case studies in abstract algebra, particularly the graphical representation given in Theorem \ref{thm-graph}, to see to what extent they resemble existing algorithms.

%\label{sec-nci}
%\label{sec-alg}
%\label{sec-thmalg}
%\label{sec-dc}
%\label{sec-tape}
%\label{sec-conc}

\bibliographystyle{plain}
\bibliography{/home/thomas/Dropbox/tp}
\end{document}